\def\withcomments{1}
\newcommand{\A}{{\mathcal A}}
\newtheorem{theorem}{Theorem}[section]
\newtheorem{lemma}[theorem]{Lemma}
\newtheorem{claim}[theorem]{Claim}
\newtheorem{definition}{Definition}[section]
\numberwithin{figure}{section}
\newcommand{\nA}{\mathbf{a}}
\newcommand{\nB}{\mathbf{b}}
\newcommand{\cP}{{\mathcal P}}
\newcommand{\C}{{\mathcal C}}
\newcommand{\eps}{{\epsilon}}
\newcommand{\dis}{dist}
\newcommand{\Dis}{Dist}
\newcommand{\ste}{{\epsilon\emph{-tester}}}
\newcommand{\dP}{\mathcal P}
\newcommand{\dN}{\mathcal N}
\newcommand{\Accept}{\textbf{Accept}\xspace}
\newcommand{\reject}{\textbf{reject}\xspace}
\newcommand{\lc}{lc}
\newcommand{\elc}{elc}
\newcommand{\integerset}[1]{[0..{#1})}
\newcommand{\lind}{i}
\DeclareMathOperator*{\E}{\mathbb E}
\newenvironment{myproof}{\begin{proof}
}{
\end{proof}}
   \newcommand{\mnote}[1]{{\color{red}\footnote{{\color{violet} {\bf M:} #1}}}}
   \newcommand{\pnote}[1]{{\color{green}\footnote{{\color{green} {\bf P:} #1}}}}
\newcommand{\srnote}[1]{{\color{blue}\footnote{{\color{blue} {\bf S:} #1}}}}
  \newcommand{\dr}[1]{{\color{orange} #1}}
  \newcommand{\mch}[1]{{\color{violet}#1}}
   \newcommand{\mnote}[1]{}
   \newcommand{\pnote}[1]{}
  \newcommand{\srnote}[1]{}
   \newcommand{\changenote}[1]{}
   \newcommand{\dr}[1]{}
   \newcommand{\mch}[1]{}
\title{Testing Connectedness of Images\footnote{A preliminary version of this article~\cite{BermanMRR23} was published in the proceedings of the $27$th International Conference on Randomization and Computation, RANDOM, 2023.}}
\author{Piotr Berman\\
\and Meiram Murzabulatov \thanks{Nazarbayev University, Kazakhstan; {\tt meiram.murzabulatov@nu.edu.kz}. Funded and supported by the Science Committee of the Ministry of Science and Higher Education of the Republic of Kazakhstan (Grant IRN AP19679952). 
}\\
\and Sofya Raskhodnikova\thanks{Boston University, USA; {\tt sofya@bu.edu}. 
}\\
\and Dragos-Florian Ristache\thanks{Boston University, USA; {\tt dragosr@bu.edu}. Supported by Shibulal Family Career Development Grant.}
}
\begin{document}
\maketitle
\begin{abstract}
We investigate algorithms for testing whether an image is connected. Given a proximity parameter $\eps\in(0,1)$ and query access to a black-and-white image represented by an $n\times n$ matrix of Boolean pixel values, a (1-sided error) connectedness tester accepts if the image is connected and rejects with probability at least 2/3 if the image is $\eps$-far from connected. We show that connectedness can be tested nonadaptively with $O(\frac 1{\eps^2})$ queries and adaptively with  $O(\frac{1}{\eps^{3/2}} \sqrt{\log\frac{1}{\eps}})$ queries. The best connectedness tester to date, by Berman, Raskhodnikova, and Yaroslavtsev (STOC 2014) had query complexity  $O(\frac 1{\eps^2}\log \frac 1{\eps})$ and was adaptive. We also prove that every nonadaptive, 1-sided error tester for connectedness must make $\Omega(\frac 1\eps\log \frac 1\eps)$ queries. 

\end{abstract}
\section{Introduction}\label{sec:intro}

Connectedness is one of the most fundamental properties of images~\cite{MinskyP70}. In the context of property testing, it was first studied two decades ago \cite{Ras03}, but the query complexity of this property is still unresolved. We improve the algorithms for testing this property and also give the first lower bound on the query complexity of this task.

We focus on black-and-white images. For simplicity, we only consider square images, but everything in this paper can be easily generalized to rectangular images. We represent an image by an $n\times n$ binary matrix $M$ of pixel values, where 0 denotes white and 1 denotes black. To define {\em connectedness}, we consider {\em the  image graph $G_M$} of an image $M$. The vertices of $G_M$ are
$\{(i,j)\mid M[i,j]=1\}$, and two vertices $(i,j)$ and $(i',j')$ are connected by an edge if $|i-i'|+|j-j'|=1$.
In other words, the image
graph consists of black pixels connected by the grid lines. The image is
{\em connected} if its image graph is connected. The set of connected images is denoted $\C.$ 

We study connectedness in the property testing model~\cite{RS96,GGR98}, first considered in the context of images in~\cite{Ras03}. A (1-sided error) property tester for connectedness gets query access to the input matrix $M$.  Given a proximity parameter $\eps\in(0,1)$, the tester has to accept 
if $M$ is connected and reject with probability at least 2/3 if $M$ is $\eps$-far from connected. An image is $\eps$-far from connected if at least an $\eps$ fraction of pixels have to be changed to make it connected. The tester is {\em nonadaptive} if it makes all its queries before receiving any answers; otherwise, it is {\em adaptive.}

In \cite{Ras03}, it was shown that connectedness can be tested adaptively with $O(\frac 1{\eps^2}\log^2 \frac 1{\eps})$ queries. The complexity of testing connectedness adaptively was later improved to $O(\frac 1{\eps^2}\log \frac 1{\eps})$ in \cite{BermanRY14}.

\subsection{Our Results}\label{sec:results}
\paragraph{Connectedness Testers.} 

We give two new algorithms for testing connectedness of images: one adaptive, one nonadaptive. Both improve on the best connectedness tester to date in terms of query complexity. Previously, no nonadaptive testers for connectedness were proposed.

 \begin{theorem}\label{thm:connectedness_tester}
Given a proximity parameter $\eps\in(0,1)$, connectedness of $n\times n$ images, where $n\geq 8\eps^{-3/2}$, 
can be $\eps$-tested adaptively and with 1-sided error with query and time complexity $O(\frac{1}{\eps^{3/2}} \sqrt{\log\frac{1}{\eps}})$.

It can be tested nonadaptively and with 1-sided error  with query and time complexity $O(\frac{1}{\eps^{2}})$.
\end{theorem}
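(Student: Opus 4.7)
The plan is to derive both testers from a common structural lemma: any image $M$ that is $\eps$-far from connected contains many \emph{small} connected components in a quantitative sense. Concretely, I would prove that there is a size threshold $T = T(\eps)$ such that the black pixels lying in components of $G_M$ of size at most $T$ form a set of size $\Omega(\eps n^2)$, possibly with a mild logarithmic loss later repaid by choosing the right scale. Such a lemma would follow by exhibiting an explicit repair procedure that turns $M$ into a connected image and arguing that, if small components were scarce, the procedure would cost fewer than $\eps n^2$ pixel changes; in a $2$D grid the repair can be made cheap by clustering the components into boxes and linking them with short grid paths, so the connection cost scales as $O(n\sqrt{N})$ in the number $N$ of components, which sharpens the corresponding charging argument from \cite{BermanRY14}.

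For the adaptive tester, I would sample $\Theta(1/\eps)$ pixels uniformly at random and, from each sampled black pixel, run a BFS in $G_M$ that aborts as soon as more than $T$ vertices are explored. The tester rejects iff some BFS completes without aborting, which certifies a component of size at most $T$; no connected image has one as soon as $n^2 > T$, so one-sided error is immediate. Soundness follows from the structural lemma: a random pixel lies in a small component with probability $\Omega(\eps)$, so $\Theta(1/\eps)$ samples suffice. Each BFS uses $O(T)$ queries, and taking $T = \Theta(\eps^{-1/2}\sqrt{\log(1/\eps)})$ yields $O(\eps^{-3/2}\sqrt{\log(1/\eps)})$ queries overall. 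The hypothesis $n \geq 8\eps^{-3/2}$ ensures small components fit inside the image so that the local BFS is not artificially truncated by the boundary.

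For the nonadaptive tester, I would use pixel-centered window queries: sample $\Theta(1/\eps)$ pixels uniformly and, around each, query every pixel of a $d \times d$ window with $d = \Theta(\eps^{-1/2})$. The tester rejects iff some window exposes a black connected region separated by at least one white pixel from the window boundary, since such a region must be a genuine component of $G_M$ of size at most $T = \Theta(\eps^{-1/2})$. Because the query set depends only on the random samples, the algorithm is nonadaptive. When a sampled black pixel lies in a component of diameter at most $T$, the window of side $\Theta(T)$ around it contains the whole component together with a one-pixel white buffer, so the structural lemma guarantees detection with constant probability. The total query complexity is $O(1/\eps \cdot d^2) = O(1/\eps^2)$.

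The main obstacle is proving the structural lemma at the sharp threshold. The $\sqrt{\log(1/\eps)}$ improvement in the adaptive bound rests on replacing the per-component connection cost of prior work by a clustering bound of order $n\sqrt{N}$ on the total cost of merging $N$ components, which drives the effective $T$ down from $\Theta(\eps^{-1})$ to roughly $\eps^{-1/2}$. For the nonadaptive tester, pushing the bound to a clean $O(1/\eps^2)$ without logarithmic factors requires ruling out a scenario in which small-component mass is spread across many size scales; I expect this to require a two-scale argument, or budgeting queries across $O(\log(1/\eps))$ geometrically spaced window sizes and absorbing the loss into constants. Both cases ultimately come down to the right trade-off between the size threshold $T$ and the cost of covering all components by windows.
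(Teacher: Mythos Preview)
Your structural lemma is false at the threshold you need, and the counterexamples are simple. For the adaptive tester, tile $\integerset{n}^2$ with $(k{+}1)\times(k{+}1)$ blocks where $k=\Theta(1/\sqrt{\eps})$, each block a solid black $k\times k$ square with a one-pixel white frame. This image has $\Theta(\eps n^2)$ components and is $\Theta(\eps)$-far from connected, yet every component has size $k^2=\Theta(1/\eps)\gg T=\Theta(\eps^{-1/2}\sqrt{\log(1/\eps)})$; every BFS aborts and your tester accepts. For the nonadaptive tester, take horizontal black segments of length $L=\Theta(1/\eps)$ packed into $2\times(L{+}1)$ boxes: again $\Theta(\eps)$-far, but every component has diameter $\Theta(1/\eps)$, so no $\Theta(\eps^{-1/2})$-sided window ever encloses one with a white buffer. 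Your own repair calculation confirms the mismatch: after deleting the small components, the remaining $N<n^2/T$ large ones can be linked at cost $O(n\sqrt{N})=O(n^2/\sqrt{T})$, and this drops below $\eps n^2$ only once $T=\Omega(1/\eps^2)$ --- larger, not smaller, than the $\Theta(1/\eps)$ threshold of \cite{BermanRY14}. With the correct threshold $T=\Theta(1/\eps)$ the naive BFS tester gives only $O(1/\eps^2)$ adaptively, so no improvement.

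The paper does not reason about global component sizes at all. It fixes a hierarchy of grid partitions into squares of side $k_i=\Theta(2^{-i}/\eps)$ for $i=0,\dots,\log(1/\eps)-1$ and works with the local property of \emph{border-connectedness}: every black pixel in a square must reach that square's border. The structural lemma (\Cref{lem:sum_of_local_costs}) bounds the distance to connectedness by the sum over all levels of the distances of the individual squares to border-connectedness; in the tiling counterexample a single level-$0$ square of side $\Theta(1/\eps)$ contains many trapped black islands and is far from border-connected even though every global component is large. The nonadaptive $O(1/\eps^2)$ bound follows by sampling $2^{i+1}$ squares at level $i$ and reading each in full, the geometric sum over $i$ being $O(1/\eps^2)$. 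The adaptive $O(\eps^{-3/2}\sqrt{\log(1/\eps)})$ bound is not obtained by shrinking a global size threshold; it comes from overlaying each sampled square with a diagonal lattice of spacing $\Theta(\sqrt{k_i/\log k_i})$, reading the lattice, and running BFS inside the square with a randomly truncated budget, so that a square far from border-connected is rejected with few expected queries. Your fallback of ``budgeting queries across $O(\log(1/\eps))$ geometrically spaced window sizes'' is not a device for shaving logarithms --- it is essential for soundness, and once you adopt it together with border-connectedness you are essentially at the paper's nonadaptive algorithm.
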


Previous algorithms for testing connectedness of images are modeled on the connectedness tester for bounded-degree graphs by Goldreich and Ron~\cite{GR02}: they pick a uniformly random pixel and adaptively try to find a small connected component by querying its neighbors. As discussed in \cite{Ras03}, even though connectedness of an image is defined in terms of the connectedness of the corresponding (degree-4) image graph, these two properties are different because of how the distance is defined. In the bounded-degree graph model, the (absolute) distance between graphs is the number
of edges that need to be changed to transform one graph into the other. In contrast, the (absolute) distance between two image graphs is the number of pixels (vertices) on which they differ; in other words, the edge
structure of the image graph is fixed, and only vertices can be added or removed to transform one graph into another. However, previous connectedness testers in the image model did not take advantage of the differences.

\begin{figure}[ht]
\centering
\begin{minipage}[b]{0.44\linewidth}
\includegraphics[width=0.9\linewidth]{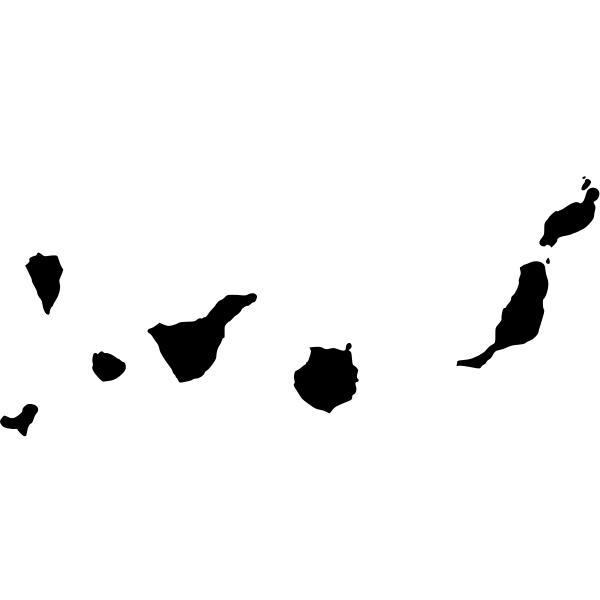}
\caption{An image $M$.}
\label{fig:islands}
\end{minipage}
\hspace{0.1\linewidth}
\begin{minipage}[b]{0.44\linewidth}
\includegraphics[width=0.9\linewidth]{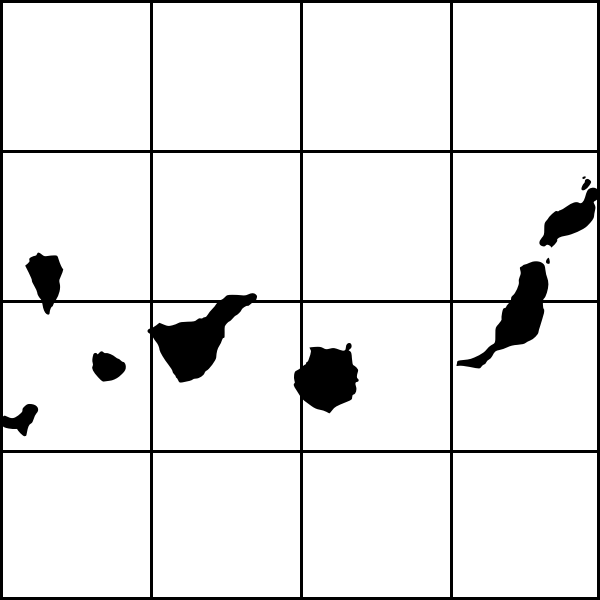}
\caption{The same image with a grid.}
\label{fig:islands-grided}
\end{minipage}
\end{figure}
As our starting point, we use an idea from \cite{BermanMR22}. That work gave an algorithm for approximating the (relative) distance to the nearest connected image within additive error $\eps$ using $O(\frac{1}{\eps^{4}})$ queries in $\exp\left(O\left(\frac 1 \eps \right)\right)$ time.
The authors
observed that one can modify an image in a small number of pixels
by drawing a grid on the image
(as shown in Figures~\ref{fig:islands} and~\ref{fig:islands-grided}). In the resulting image, the distance to connectedness is determined by the properties of individual squares into which the grid lines partition the image.

Our algorithms consider different partitions of the grid (a logarithmic number of partitions in $\frac 1 \eps$). For each partition, they sample random squares and try to check whether the squares satisfy the following property called {\em border-connectedness}.

\begin{definition}[Border-connectedness]\label{def:border_connectedness}
A (sub)image $s$ is {\em border-connected} if for every black pixel $(i,j)$ of $s$, the image graph $G_s$ contains a path from $(i,j)$ to a pixel on the border of $s$. The property
\emph{border-connectedness,} denoted $\C'$, is the set of all border-connected images.
\end{definition}
Our nonadaptive algorithm reads all pixels in each sampled square. Our adaptive algorithm further partitions each square into diamonds, as shown in \Cref{fig:diamond}.  (We could have partitioned into squares again, but partitioning into diamonds makes the proof cleaner and saves a constant factor in the analysis.)
The algorithm queries all pixels on the diamond lattice and then adaptively tries to catch a witness (that is, a small connected component) in one of the diamonds, using the lattice structure.
It partitions the diamonds into those that can be easily connected to the border and those that cannot. It tries to catch either a black pixel in the diamond in the latter group or, alternatively, a small connected component in the diamond in the former group by running a breadth-first search (BFS) for a random number of steps, drawn from a carefully selected distribution.

Even though we use the notion of border-connectedness from \cite{BermanMR22}, our algorithms and analysis are different. The authors of \cite{BermanMR22} approximate the distance to connectectedness by partitioning the image into squares with the side length $\Theta(1/\eps)$, sampling such a square, and then returning the exact distance to border-connectedness of the sampled square. (Computing this distance exactly, even in exponential time, is highly nontrivial.) In contrast, we use multiple partitions of the image. Moreover, our adaptive subroutine queries a small fraction of the pixels in the sampled square to certify that the square is not border-connected. The notions we use in the analysis of our algorithms and the accompanying correction procedures are also new.

\paragraph{The Lower Bound.} 
We also prove the first nontrivial lower bound for testing connectedness of images. Note that all nontrivial properties have query complexity  $\Omega(1/\eps)$, even for adaptive testers. 
In particular, for connectedness, this is the number of queries needed to distinguish between the white image and an image, where we color black a random subset of size $\eps n^2$ of the set of pixels with both coordinates divisible by 3.
By standard arguments, it implies a lower bound of $\Omega(1/\eps)$.
Some properties of images, such as being a halfplane, can be tested nonadaptively (with 1-sided error) with $O(1/\eps)$ queries \cite{BMR19-algorithmica}. We show that it is impossible for connectedness.

\begin{theorem}\label{thm:nonadap-bound-connectedness}
Every nonadaptive, 1-sided error $\eps$-tester for connectedness of images must query  $\Omega(\frac 1{\eps}\log \frac1 {\eps})$ pixels (for some family of images).
\end{theorem}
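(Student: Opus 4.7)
The plan is to apply Yao's minimax principle: I would construct a distribution $\mathcal{D}$ supported on $\eps$-far images such that every deterministic nonadaptive tester with $q=o(\frac{1}{\eps}\log\frac{1}{\eps})$ queries has rejection probability strictly below $2/3$ under $\mathcal{D}$.

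The key structural remark is about 1-sided rejection. Whenever such a tester rejects an image $M$ on query set $Q$, the restriction $M|_Q$ must be inconsistent with every connected completion, i.e.\ $Q$ must contain a \emph{certificate of disconnection}. In the image model, the minimal certificate is a \emph{plus witness}: a queried black pixel whose four grid-neighbors are all queried and white, together with a second queried black pixel elsewhere (any other queried white pixels do not help unless they form a closed ``loop'' around some queried black region, which is always at least as expensive as a plus). So I would first verify that analyzing plus witnesses suffices.

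The distribution $\mathcal{D}$ is a uniform mixture of $L=\Theta(\log\frac{1}{\eps})$ per-scale sub-distributions $\mathcal{D}_1,\ldots,\mathcal{D}_L$. For each scale $i$, $\mathcal{D}_i$ places $\Theta(\eps n^2)$ isolated black pixels at uniformly random positions of a scale-specific sparse subgrid $T_i\subseteq\integerset{n}^2$ whose spacing automatically isolates every selected pixel. The elementary bound $\dis(M,\C)\ge (N-1)/3$ (each pixel change can reduce the component count by at most $3$), where $N$ is the number of components of $M$, then certifies that every $\mathcal{D}_i$-image is $\Omega(\eps)$-far from $\C$. For a fixed $Q$ and scale $i$, I would bound $\Pr_{M\sim\mathcal{D}_i}[\text{tester rejects}]$ by the expected number of plus witnesses in $M|_Q$, which decomposes as a sum over pluses in $Q$ whose center lies in $T_i$, each weighted by the per-plus probability that the center was selected as black under $\mathcal{D}_i$. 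Averaging via $\frac{1}{L}\sum_i\Pr_{\mathcal{D}_i}[\text{reject}]\ge 2/3$ converts the success requirement into a lower bound on the scale-weighted total number of plus patterns in $Q$.

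The delicate step --- and the main obstacle --- is arguing that queries cannot be amortized across scales. I would choose the $T_i$ so that they are essentially disjoint and so that the per-plus probability at scale $i$ is calibrated (via the density $|\{\text{selected}\}|/|T_i|$) to force $\Omega(1/\eps)$ plus patterns centered in $T_i$ in order to saturate any single scale, while simultaneously preventing a plus centered in $T_i$ from contributing non-trivially to any scale $j\neq i$. This non-amortization is exactly what upgrades the naive $\Omega(1/\eps)$ per-scale bound to the global $\Omega(\frac{1}{\eps}\log\frac{1}{\eps})$; I expect the bulk of the technical work to lie in choosing the $T_i$ and the densities so that this geometric-series balancing is tight, and in handling the secondary ``another queried black pixel'' condition uniformly across scales.
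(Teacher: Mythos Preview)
There is a genuine gap: the three desiderata you impose on the $T_i$ are mutually incompatible, and the construction as described proves only $\Omega(1/\eps)$. You want (i) $L=\Theta(\log\frac{1}{\eps})$ essentially disjoint subgrids $T_i\subseteq\integerset{n}^2$, and (ii) the per-plus hit probability $\Theta(\eps n^2)/|T_i|$ small enough that saturating scale $i$ requires $\Omega(1/\eps)$ pluses centred in $T_i$. But (ii) forces $|T_i|=\Omega(n^2)$, while (i) forces $\sum_i|T_i|\le n^2$, hence $|T_i|=O(n^2/L)$. With the real budget $|T_i|=\Theta(n^2/L)$, the per-plus probability is $\Theta(\eps L)$, so only $\Theta(1/(\eps L))$ centres suffice at each scale; summing over $L$ scales yields $\Theta(1/\eps)$ centres total. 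Concretely, a deterministic tester that knows the $T_i$ can place $\Theta(|T_i|/(\eps n^2))$ pluses in each $T_i$, succeed with probability $\ge 2/3$ on every scale, and use $\sum_i \Theta(|T_i|/(\eps n^2))=O(1/\eps)$ queries. No calibration of the densities escapes the constraint $\sum_i|T_i|\le n^2$. (The paper's introduction already observes that a single layer of random isolated black pixels gives the trivial $\Omega(1/\eps)$ bound; mixing over disjoint subgrids does not improve this.)

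The paper's construction avoids this obstruction by making the \emph{size} of a certificate vary with the level, not just its location. At level $i$ the image contains $a_i\times a_i$ ``bridge squares'' with $\Theta(a_i)$ bridges, each carrying one random white disconnecting pixel; a 1-sided tester must query every disconnecting pixel of some bridge square, which with constant probability requires $\Omega(a_i^2)$ queries concentrated in a single $a_i\times a_i$ cell. Because $a_i=2^i$ and cells of different levels are nested, covered cells at one level can make windows at lower levels ``good'', and the heart of the proof is a charging argument (associating good windows to maximal covered cells) showing that a $q$-query set cannot simultaneously cover cells in enough windows at each of $\Theta(\log\frac{1}{\eps})$ levels unless $q=\Omega(\frac{1}{\eps}\log\frac{1}{\eps})$. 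Your plan never introduces a level-dependent witness size, and that is the missing idea.
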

Every 1-sided error tester must catch a witness of disconnectedness in order to reject. This witness could include a connected component completely surrounded by white pixels. The difficulty for proving hardness is that, unlike in the case of finding a witness for disconnectedness of graphs, the algorithm does not have to read the entire connected component. Instead, it is sufficient to find a closed white loop with a black pixel inside it (and another black pixel outside it). As we discussed, it is enough
for an algorithm to look for witnesses inside relatively small squares (specifically, squares with side length $O(1/\eps)$), since adding a grid around such squares, as shown in \Cref{fig:islands-grided}, would change $O(\eps n^2)$ pixels. But no matter how a witness inside such a square looks like, it can be easily captured with $O(1/\eps)$ queries if the border of the square is white.

To overcome this difficulty, we consider a checkerboard-like pattern with white squares replaced by many parallel lines, called {\em bridges}, with one white (disconnecting) pixel positioned randomly on each bridge. See \Cref{fig:interesting-square}. To catch a white border around a connected component, a tester has to query all disconnecting pixels of at least one square. To make this difficult, we hide the checkerboard pattern inside a randomly positioned {\em interesting window}. The sizes of interesting windows and their positions are selected so that the tester cannot effectively reuse queries needed to succeed in catching the disconnecting pixels in each interesting window.

\subsection{Other Related Work}
In addition to \cite{GR02}, connectedness testing and approximating the number of connected components in graphs in sublinear time was explored in \cite{CRT05,BermanRY14,BerenbrinkKM14}. Other property testing tasks studied in the pixel model of images, the model considered in this paper, include testing whether an image is a half-plane~\cite{Ras03,Ras-thesis03,Mur-thesis17}, convexity~\cite{Ras03,Ras-thesis03,Mur-thesis17,BMR19uniform,BMR19-algorithmica}, and image partitioning properties~\cite{KleinerKNB11}. Early implementations and applications to vision were provided in \cite{KleinerKNB11,KormanRT,KormanRTA13,ParalCR19}.
Finally, general classes of matrix properties were investigated, including
matrix-poset properties~\cite{FischerN07}, {\em earthmover resilient} properties~\cite{BF18}, {\em hereditary} properties~\cite{AlonBF17}, and classes of matrices that are free of specified patterns~\cite{BKR17}.

Testing connectedness has also been studied by Ron and Tsur \cite{TsurR10} with a different input representation suitable for testing sparse images.

\section{Definitions and Notation}\label{sec:definitions_notation}
 We use $\integerset{n}$ to denote the set of integers $\{0,1,\ldots,n-1\}$ and $[n]$ to denote $\{1,2,\ldots,n\}$.
By $\log$ we mean the logarithm base 2.
For a set $S\subset \integerset{n}^2$ and $(i,j)\in \integerset{n}^2$, we define
$S+(i,j)=\{(x+i,y+j):~(x,y)\in S\}$.

\paragraph{Image Representation.}
We represent an image by an $n\times n$ binary matrix $M$ of pixel values, where 0 denotes white and 1 denotes black.
The object is a subset of $\integerset{n}^2$
corresponding to black pixels; namely, $\{(i,j)\mid M[i,j]=1\}$.
The {\em border of the image} is the set $\{(i,j)\in \integerset{n}^2 \mid i\in\{0,n-1\} \text{ or } j \in\{0,n-1\}\}$.

\paragraph{Property Testing Definitions.} 
A {\em property} $\cP$ is a set of images. We say that an image $M$ {\em satisfies} $\cP$ if $M\in\cP$; otherwise, $M$ {\em violates} $\cP.$ The {\em absolute distance} from an image $M$ to a property $\cP$, denoted $\Dis(M,\cP),$ is the smallest number of pixels in $M$ that need to be modified to get an image in $\cP.$
The (relative) distance between an $n\times n$ image $M$ and a property $\cP$ is $\dis(M,\cP)=\Dis(M,\cP)/n^2.$ We say that $M$ is $\eps$-far from $\cP$ if $\dis(M,\cP)\geq \eps$; otherwise, $M$ is $\eps$-close to $\cP.$
    
\section{Adaptive and Nonadaptive Property Testers for Connectedness}\label{tester_for_connectedness}
In this section, we present our testers for connectedness, proving Theorem~\ref{thm:connectedness_tester}. Both testers use the same top-level procedure, described in Algorithm~\ref{alg:connectedness_tester}. First, it samples random pixels to ensure that a black pixel is found. 
The black pixel will be used later together with an isolated black component to certify non-connectedness.
Then Algorithm~\ref{alg:connectedness_tester} considers a logarithmic number of partitions of the image into subimages of the same size. For each partition, it samples a carefully selected number of these
subimages and tests them for border-connectedness (see Definition~
\ref{def:border_connectedness}). This is where the two algorithms diverge. The nonadaptive algorithm tests for border-connectedness using the subroutine \emph{Exhaustive-Square-Tester} which queries all pixels in the sampled square and determines exactly if the square is border-connected. This can be done by a BFS on the image graph for the sampled square that checks if all the black pixels can be reached from the border.
 The adaptive algorithm uses subroutine \emph{Diagonal-Square-Tester} (Algorithm~\ref{alg:diagonal_square_tester}).
If the top-level procedure finds a subimage that violates border-connectedness and a black pixel outside that subimage, it rejects; otherwise, it accepts.

To simplify the analysis of the algorithm, we assume\footnote{\label{footnote:integrality-issues}This assumption can be made
w.l.o.g.\ because if $n\in(2^{i-1}+1,2^{i}+1)$ for some $\lind$ , instead of the original image $M$ we can consider a $(2^{i}+1)\times (2^{i}+1)$ image $M'$, which is equal to $M$ on the corresponding coordinates and has white pixels everywhere else. Let
$\eps'=\eps n^{2}/(2^{i}+1)^{2}$.
To $\eps$-test $M$ for connectedness, it suffices to $\eps'$-test $M'$ for connectedness.
The resulting tester for $M$ has the desired query complexity because $\eps'=\Theta(\eps)$. If $\eps\in(1/2^{j},1/2^{j-1})$ for
some $j$, to $\eps$-test a property ${\cal P}$, it suffices to run an $\eps''$-test for ${\cal P}$ with  $\eps^{\prime\prime}=1/2^{j}<\eps$.} that $n-1$ and $1/\eps$ are powers of
$2$.
Next, we define terminology used to describe the partitions considered by Algorithm~\ref{alg:connectedness_tester}.

\begin{figure}[ht]
\centering
\includegraphics[width=0.3\linewidth]{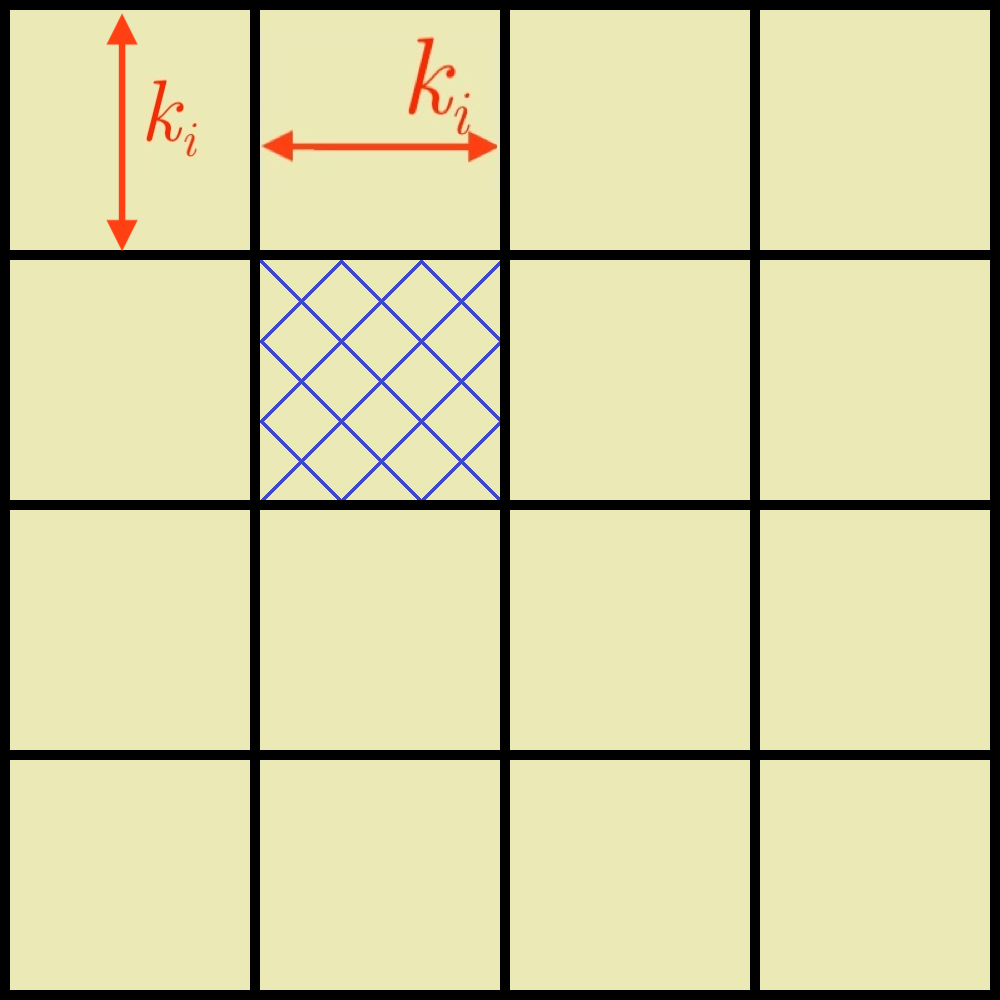}
\caption{An illustration to \Cref{def:Grid_pixels_squares_of_different_levels}: black lines consist of grid pixels; the 16 yellow $k_i\times k_i$ squares represent squares of $S_i$. One of the squares includes diagonal lattice pixels from \Cref{def:diagonal_lattice_pixels_and_regions} that are used in \Cref{alg:diagonal_square_tester}.}
\label{fig:gridImage}
\end{figure}

\begin{definition}[Grid pixels, squares of different levels, witnesses]
\label{def:Grid_pixels_squares_of_different_levels}
For $i\in\integerset{\log\frac{1}{\eps}}$,
let $k_{i}=\frac{4}{\eps}\cdot
2^{-i}-1$.
Call $(x,y)$ {\em a grid pixel of level $i$}  if  $(k_{i}+1)|x$ or $(k_{i}+1)|y$.
For all coordinates $u,v$, which are divisible by 
$k_i+1$,
the $k_{i}\times k_{i}$ subimage that consists of pixels $[k_{i}]^{2}+(u,v)$ is
called a \emph{square of level} $\lind$. The set of all squares of level $\lind$  is denoted ${\cal{S}}_{i}$.
{\em Boundary pixels} of a square of level $\lind$  are the pixels of the square which are adjacent to the grid pixels of level $\lind$. A square of any level 
that violates property $\C'$ (see Definition~\ref{def:border_connectedness}) is called a \emph{witness}.
\end{definition}
\SetAlCapHSkip{0pt} 
\setlength{\algomargin}{0em}
\begin{algorithm}
\caption{$\ste$ for connectedness.}
\label{alg:connectedness_tester}

\SetKwInOut{Input}{input}\SetKwInOut{Output}{output}
\Input{parameter $\eps\in(0,1)$; access to a $n\times n$ binary
matrix $M$.}
\DontPrintSemicolon
\BlankLine
\nl\label{step:query-pixels}Query $\frac{8}{\eps}$ pixels uniformly at random with replacement.
\;
\nl\label{step:go-through-all-i}\For {$i=0$ {\bf to}  $\log\frac{1}{\eps}-1$}{
\nl \label{step:square-test}{\bf Repeat}  $2^{i+1}$ times:
 \begin{enumerate}[(a)]
     \item\label{step:sample-squares} 
     Sample a uniformly  random  square $s$ of level $\lind$  (see Definition~\ref{def:Grid_pixels_squares_of_different_levels}) and let $[k_{i}]^{2}+(u,v)$ be the set of its pixels.

 \item\label{step:test-for-border-connectedness} 
 
Run the border-connectedness subroutine  
on the square $s$:
if the tester is nonadaptive, use \emph{Exhaustive-Square-Tester} (see the beginning of Section~\ref{tester_for_connectedness}); otherwise, use \emph{Diagonal-Square-Tester} (Algorithm~\ref{alg:diagonal_square_tester}) with inputs $i,u,v$. \\{\bf If} the subroutine rejects and Step~\ref{step:query-pixels} detected a black pixel outside $s$, \reject.
\;
\end{enumerate}
}
\nl
\Accept.
\;
\end{algorithm}

\subsection{Effective Local Cost and the Structural Lemma}
In this section, we state and prove the main structural lemma 
(Lemma~\ref{lem:sum_of_local_costs})
used in the analysis of Algorithm~\ref{alg:connectedness_tester}. It relates the distance to connectedness to the properties of individual squares, defined next.

\begin{definition}[Local cost and effective local cost]\label{def:elc}
For a level $i$, consider a square $s\in {\cal{S}}_{i}$.
The \emph{local cost} of $s$ is $\lc(s)=\Dis(s,\C')$. The \emph{effective local cost} of $s$ is
$\elc(s)=\min(2k_{i},\lc(s))$.
\end{definition}

Next we state and prove two claims used in the proof of Lemma~\ref{lem:sum_of_local_costs}.
\begin{claim}
\label{cl:parent_children_cost_relation}
For any square $s$ of level $i\in \integerset{\log\frac{1}{\eps}-1}$, let $ch(s)$ denote the set of its $4$ children (i.e.,
squares of level $i+1$ inside it).
Then $\lc(s)\leq \elc(s)+\sum\nolimits_{q\in ch(s)} \lc(q)$.
\end{claim}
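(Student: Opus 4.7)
The plan is to show by explicit construction that $s$ can be made border-connected using at most $\elc(s)+\sum_{q\in ch(s)}\lc(q)$ pixel flips, which will give the desired upper bound on $\lc(s)$. The trivial case $\elc(s)=\lc(s)$ (equivalently $\lc(s)\le 2k_i$) gives the bound immediately, so I would focus on the case $\elc(s)=2k_i$, i.e., $\lc(s)\ge 2k_i$, where the target inequality becomes $\lc(s)\le 2k_i+\sum_{q}\lc(q)$.

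First I would set up geometry. Because $k_i+1=2(k_{i+1}+1)$, within $s$ the four children of level $i+1$ are separated by a ``cross'' consisting of exactly those pixels of $s$ that lie in relative row $k_{i+1}+1$ or relative column $k_{i+1}+1$; these are grid pixels of level $i+1$ that sit inside $s$, and there are $k_i+k_i-1=2k_i-1$ of them. Crucially, the cross extends all the way across $s$: the two ends of the cross row are adjacent to level-$i$ grid pixels to the left and right of $s$, hence are boundary pixels of $s$, and analogously for the cross column.

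The construction is then: \textbf{(a)} for each child $q\in ch(s)$, apply an optimal sequence of $\lc(q)$ pixel flips (all interior to $q$) to make $q$ border-connected; \textbf{(b)} color every pixel of the middle cross black, using at most $2k_i-1$ flips. The operations in the different children are disjoint and do not touch the cross, and step (b) does not touch any child's interior, so the total cost is at most $\sum_{q}\lc(q)+2k_i-1<\sum_{q}\lc(q)+2k_i=\sum_{q}\lc(q)+\elc(s)$.

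The main thing to check, and where I expect the bulk of the verification to go, is that the resulting image of $s$ really is border-connected. For any black pixel $p$ in $s$, I would distinguish two cases. If $p$ lies on the cross, then since the entire cross is now black and the cross reaches $s$'s boundary, $p$ is connected in $G_s$ to a boundary pixel of $s$. If $p$ lies in some child $q$, border-connectedness of $q$ supplies a path in $G_q\subseteq G_s$ from $p$ to some pixel $p'$ on the boundary of $q$; the boundary of $q$ splits into an \emph{outer} part (adjacent to level-$i$ grid pixels, hence on $\partial s$) and an \emph{inner} part (adjacent to the middle cross). In the outer case $p'$ already lies on $\partial s$; in the inner case $p'$ is adjacent to a black cross pixel, which in turn is connected through the black cross to $\partial s$. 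This completes the proof.
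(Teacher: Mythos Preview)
Your proof is correct and follows essentially the same approach as the paper: handle the trivial case $\elc(s)=\lc(s)$ separately, and in the nontrivial case make each child border-connected at cost $\sum_q \lc(q)$ and then blacken the internal cross separating the children at cost at most $2k_i$. Your write-up is in fact more explicit than the paper's in verifying that the resulting subimage is border-connected (via the outer/inner boundary case split for each child), and your pixel count $2k_i-1$ for the cross is slightly sharper than the paper's ``at most $2k_i$''.
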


\begin{proof}
If $\lc(s)\leq 2k_{i}$ then $\elc(s)=\lc(s)$. Since all costs are nonnegative,
the inequality in \Cref{cl:parent_children_cost_relation} becomes trivial.

Now assume that $\lc(s)>2k_{i}$. Then $\elc(s)=2k_{i}$. We can modify
$\sum\nolimits_{q\in ch(s)} \lc(q)$ pixels in $s$ so that all its children
satisfy the property $\C'$. (Note that here $\C'$ is  the set of $k_i\times k_i$ (sub)images.) Then we can make black all pixels of $s$ that
partition it into its children, i.e., pixels $\{(x,y)\mid
x=\frac{k_{i}+1}{2}$ or $y=\frac{k_{i}+1}{2}\}$. There are at most $2k_{i}$ such
pixels, and after this modification $s$ will satisfy $\C'$. Hence,
$\lc(s)\leq \elc(s)+\sum\nolimits_{q\in ch(s)} \lc(q)$.
\end{proof}

\begin{claim}[Distance to border-connectedness]
\label{cl:max_dist_to_border_connectedness}
Let $s$ be a $k\times k$ image. Then $$\Dis(s,\C')\leq
\frac{k^{2}}{4}.$$
\end{claim}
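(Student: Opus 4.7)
The plan is to construct a border-connected image $s'$ of Hamming distance at most $k^{2}/4$ from $s$. The key structural observation is that any fully-black row (or column) of $s'$ is automatically border-connected, since it forms a horizontal (resp.\ vertical) line touching the left and right (resp.\ top and bottom) borders of $s$. Consequently, if every other row of $s'$ is entirely black, then every black pixel of $s'$ is border-connected: each is either in a fully-black row, or vertically adjacent to one.

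The first natural attempt is to pick the cheaper of the two candidates ``make all even rows black'' and ``make all odd rows black.'' Their costs sum to $k^{2}-|M|$, where $|M|$ denotes the number of black pixels in $s$, so the minimum is at most $(k^{2}-|M|)/2$. Combined with the trivial option of erasing all black pixels of $s$ (cost $|M|$), this yields $\Dis(s,\C')\le \min\{(k^{2}-|M|)/2,\,|M|\}\le k^{2}/3$, which falls short of the claimed $k^{2}/4$.

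To close the gap, I would refine the scheme by processing adjacent pairs of rows $(2a,2a+1)$ individually. For each pair with $m_{2a}$ and $m_{2a+1}$ black pixels respectively, pick the cheaper of (i) making the denser row entirely black while leaving the other row untouched, at cost $k-\max(m_{2a},m_{2a+1})$, or (ii) removing all black pixels of the pair, at cost $m_{2a}+m_{2a+1}$. The resulting image is border-connected: in every pair, either one fully-black row dominates both rows, or the pair contains no black pixels. The analogous column-pair construction provides a second candidate; I take whichever is cheaper overall.

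The main obstacle I anticipate is the tight cost analysis. The per-pair cost $\min\{k-\max(m_{2a},m_{2a+1}),\,m_{2a}+m_{2a+1}\}$ can be as large as $2k/3$ (attained when $m_{2a}=m_{2a+1}=k/3$), giving only $k^{2}/3$ if every pair is simultaneously worst-case. Achieving the tight $k^{2}/4$ bound requires exploiting a trade-off between row and column densities of~$s$: the worst case for the row-pair scheme forces a roughly uniform distribution of $\sim k/3$ blacks per row, but such a distribution should make the column-pair scheme strictly cheaper. A careful averaging argument between the row-pair and column-pair schemes, combined with the ``erase all'' and ``fill all'' fallbacks for the sparse and dense extremes of $|M|$, should then yield a total cost of at most $k^{2}/4$ in all cases.
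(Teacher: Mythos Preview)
Your approach has a genuine gap: the four schemes you propose (row-pair, column-pair, erase all, fill all) cannot achieve the bound $k^{2}/4$. Consider a $k\times k$ image in which every row and every column contains exactly $k/3$ black pixels (e.g.\ a circulant pattern when $3\mid k$), so $|M|=k^{2}/3$. In each row pair $m_{2a}=m_{2a+1}=k/3$, so the per-pair cost is $\min\{k-k/3,\,2k/3\}=2k/3$; with $k/2$ pairs the row-pair scheme costs $k^{2}/3$. By symmetry the column-pair scheme also costs $k^{2}/3$. Erasing all costs $|M|=k^{2}/3$, and filling all costs $2k^{2}/3$. Thus every one of your candidates costs at least $k^{2}/3>k^{2}/4$, and no averaging between the row and column schemes can help since they tie on this symmetric instance.

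The missing idea is to use stride~$3$ rather than stride~$2$: making every \emph{third} row fully black already yields a border-connected image, because each remaining row is adjacent to a fully black one. This needs only about $k/3$ rows to be filled, so averaging over the three residue classes gives cost at most $(k^{2}-|M|)/3$. Combined with the ``erase all'' option at cost $|M|$, one obtains
\[
\Dis(s,\C')\le \min\Bigl\{\tfrac{k^{2}-|M|}{3},\,|M|\Bigr\}\le \tfrac{k^{2}}{4},
\]
with the maximum of the right-hand side attained at $|M|=k^{2}/4$. This is exactly the paper's argument. Your stride-$2$ pairing forces roughly $k/2$ rows to be filled, which is why the bound stalls at $k^{2}/3$.
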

\begin{proof}
 If $s$ contains at most $\frac{k^{2}}{4}$ black pixels, we can make all of them white, i.e.,
 modify at most $\frac{k^{2}}{4}$ pixels and obtain an image that satisfies $\C'$.
 Now consider an image $s$ with more than  $\frac{k^{2}}{4}$ black pixels, i.e., with less than $\frac{3k^{2}}{4}$ white pixels. Partition all
 pixels of $s$ into $3$ groups such that group $i\in\{0,1,2\}$ contains all pixels $(x,y)$, where $y\equiv i\pmod 3$. Making all pixels of one group black produces an image where every third row is completely black. Note that completely black rows are also connected to the border. Every black pixel in the recolored image that is not in such a row is either directly above or directly below a black row and, consequently, also connected to the border. Thus, the new image satisfies $\C'$. By averaging, at least one group has less than  $\frac{k^{2}}{4}$ white pixels. Making all these white pixels black results in an image that satisfies $\C'$, proving the claim.
\end{proof}

\begin{lemma}[Structural lemma]
\label{lem:sum_of_local_costs}
Let $M$ be an $n\times n$ image that is $\eps$-far from $\C$. Then the sum of
effective local costs of all squares of all levels inside $M$ is at least $\frac{\eps n^{2}}{2}$.
\end{lemma}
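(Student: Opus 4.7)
The plan is to combine \Cref{cl:parent_children_cost_relation} with an upper bound on $\Dis(M,\C)$ expressed in terms of level-$0$ local costs. The argument splits into two steps: a telescoping of $\lc$ into $\elc$ across all levels, and a lower bound on the total level-$0$ local cost derived from the $\eps$-farness of $M$.

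\textbf{Telescoping across levels.} I would prove by induction, starting at the finest level $I=\log\frac{1}{\eps}-1$ and working upward, that every square $s$ of any level $i_s$ satisfies
\[
  \lc(s)\;\le\;\sum_{i=i_s}^{I}\,\sum_{\substack{q\in{\cal S}_i\\ q\subseteq s}}\elc(q).
\]
For the base case $i_s=I$ the side length is $k_I=7$, so \Cref{cl:max_dist_to_border_connectedness} gives $\lc(s)\le k_I^2/4=49/4<14=2k_I$, and hence $\elc(s)=\lc(s)$. For the inductive step at a level $i_s<I$, \Cref{cl:parent_children_cost_relation} gives $\lc(s)\le\elc(s)+\sum_{q\in ch(s)}\lc(q)$, and the inductive hypothesis is applied to each of the four children. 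Specialising to $s\in{\cal S}_0$ and summing, while noting that every level-$i$ square lies inside exactly one level-$0$ square, yields
\[
  \sum_{s\in{\cal S}_0}\lc(s)\;\le\;\sum_{i=0}^{I}\,\sum_{q\in{\cal S}_i}\elc(q).
\]

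\textbf{Lower bound on the total level-$0$ local cost.} Let $N$ denote the number of level-$0$ grid pixels of $M$. I would exhibit a modification of $M$ into an image in $\C$ of cost at most $N+\sum_{s\in{\cal S}_0}\lc(s)$: blacken every level-$0$ grid pixel and, inside each $s\in{\cal S}_0$, apply an optimal $\lc(s)$-pixel modification that turns $s$ into a border-connected image. The blackened grid pixels form a single connected subgraph of the new image graph (a rectilinear frame), and the border of each level-$0$ square is adjacent to grid pixels, so border-connectedness of each $s$ attaches every black pixel inside $s$ to this frame. Hence the modified image is connected, and since $M$ is $\eps$-far from $\C$,
\[
  \eps n^2\;\le\;N+\sum_{s\in{\cal S}_0}\lc(s).
\]
A direct count using the footnote's integrality assumption (so that $k_0+1=4/\eps$ divides $n-1$) together with the hypothesis $n\ge 8\eps^{-3/2}$ shows that $N\le \eps n^2/2$. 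Combining this with the telescoped bound gives $\sum_{i=0}^{I}\sum_{q\in{\cal S}_i}\elc(q)\ge \eps n^2/2$, which is the statement of the lemma.

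\textbf{Main obstacle.} The telescoping is routine once the base case $k_I=7$ is checked (where $k_I^2/4<2k_I$). The technical crux is the grid-pixel count in the second step: with $K=(n-1)\eps/4$, the level-$0$ grid contains $(K+1)(2n-K-1)$ pixels, which matches $\eps n^2/2$ only up to $O(n)$ lower-order terms; tightening this to $N\le \eps n^2/2$ requires both the integrality assumption and the lower bound $n\ge 8\eps^{-3/2}$ to absorb the corrections.
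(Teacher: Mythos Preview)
Your proposal is correct and follows essentially the same approach as the paper: both arguments (i) bound $\Dis(M,\C)$ above by the number of level-$0$ grid pixels plus $\sum_{s\in\mathcal S_0}\lc(s)$, and (ii) telescope each $\lc(s)$ into a sum of $\elc$'s over all descendant squares via \Cref{cl:parent_children_cost_relation}, terminating at level $I=\log\frac1\eps-1$ where $k_I=7$ forces $\elc=\lc$ by \Cref{cl:max_dist_to_border_connectedness}. Your induction runs from the finest level upward while the paper inducts on an auxiliary depth parameter $j$ from the top down, but the content is identical.

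One remark on your flagged obstacle: the paper simply asserts that the number of level-$0$ grid pixels equals $\frac{\eps n^2}{2}$ and moves on, without the careful accounting you describe. Your instinct to check this is sound, but your claim that the hypothesis $n\ge 8\eps^{-3/2}$ alone absorbs the $O(n)$ correction is slightly optimistic: with $L=(n-1)\eps/4+1$ grid lines per direction, $N=2Ln-L^2$ exceeds $\eps n^2/2$ by roughly $2n-L^2$, and clearing this requires $n=\Omega(1/\eps^2)$ rather than $\Omega(\eps^{-3/2})$. In practice this is a harmless lower-order term (and the paper treats it as such); if one insists on the exact constant $\frac{\eps n^2}{2}$, the cleanest fix is to loosen it to, say, $\frac{\eps n^2}{3}$ and adjust downstream constants accordingly.
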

\begin{proof}
To obtain a connected image, we can make all the $\frac{\eps n^{2}}{2}$ grid pixels of level $0$ black and modify pixels inside every square of $\mathcal{S}_{0}$ to  ensure it satisfies the property $\C'$.
Thus, $$\sum\nolimits_{s\in \mathcal{S}_{0}} \lc(s)\geq \Dis(M,\C)-\frac{\eps
n^{2}}{2}\geq\frac{\eps n^{2}}{2}.$$
Consequently,
it suffices to show that
$\sum\nolimits_{i=0}^{\log\frac{1}{\eps}-1}\sum\nolimits_{s\in \mathcal{S}_{i}}
\elc(s)\geq\sum\nolimits_{s\in \mathcal{S}_{0}} \lc(s)$.

Fix a level $i$ and a square $s\in \mathcal{S}_{i}$.
For a level $j\geq i$, let $desc(s,j)$  denote the set of all squares of level $j$ inside $s$.
(In particular, $desc(s,i)$ contains
only $s$.)
 We will prove by induction that for every integer $j\in
[i,\log\frac{1}{\eps}-1),$
\begin{align}\label{eq:lc-bound}
\lc(s)\leq\sum_{h=i}^{j}\sum\nolimits_{q\in
desc(s,h)}\elc(q)+\sum\nolimits_{q\in desc(s,j+1)}\lc(q).    
\end{align}

For $j=i$ (base case), the inequality in \eqref{eq:lc-bound} holds since it is equivalent to the
statement in Claim~\ref{cl:parent_children_cost_relation}. Assume
that \eqref{eq:lc-bound} holds for $j=m$, that is,
$$\lc(s)\leq\sum_{h=i}^{m}\sum\nolimits_{q\in
desc(s,h)}\elc(q)+\sum\nolimits_{q\in desc(s,m+1)}\lc(q).
$$
 We will prove \eqref{eq:lc-bound} holds  for $j=m+1$. By Claim~\ref{cl:parent_children_cost_relation},
$$\lc(q)\leq
\elc(q)+\sum\nolimits_{f\in ch(q)}\lc(f).$$ Thus,
$$\sum\nolimits_{q\in
desc(s,m+1)}\lc(q)\leq \sum\nolimits_{q\in
desc(s,m+1)}\elc(q)+\sum\nolimits_{q\in desc(s,m+2)}\lc(q).$$  By the induction hypothesis,
\begin{align*}
    \lc(s) &\leq\sum_{h=i}^{m}\sum\nolimits_{q\in
desc(s,h)}\elc(q)+\sum\nolimits_{q\in desc(s,m+1)}\lc(q)\\
&\leq
\sum_{h=i}^{m+1}\sum\nolimits_{q\in desc(s,h)}\elc(q)+\sum\nolimits_{q\in
desc(s,m+2)}\lc(q),
\end{align*}
completing the inductive argument.

By \eqref{eq:lc-bound} applied with $j=\log \frac{1}{\eps}-2$, we get that for every square $s$ of level $i$,
\begin{align}
\lc(s)
&\leq\sum_{h=i}^{\log \frac{1}{\eps}-2}\sum\nolimits_{q\in
desc(s,h)}\elc(q)+\sum\nolimits_{q\in desc(s,\log\frac{1}{\eps}-1)}\lc(q)\nonumber\\
&=\sum_{h=i}^{\log\frac{1}{\eps}-1}\sum\nolimits_{q\in desc(s,h)}\elc(q),
\label{eq:lc-one-square}
\end{align}
where the final equality holds because in each square of level $i=\log \frac{1}{\eps}-1$, we have $k_i=\frac{4}{\eps}\cdot
2^{-i}-1=7$, and consequently,  by 
\Cref{cl:max_dist_to_border_connectedness},
the local cost is at most $\frac{7^2}{4}<2\cdot 7$, i.e., it is equal to the effective local cost of that square.
Summing up \eqref{eq:lc-one-square} for all squares in $\mathcal{S}_{0}$, we get
$$\sum\nolimits_{s\in \mathcal{S}_{0}}\lc(s)\leq
\sum\nolimits_{s\in
\mathcal{S}_{0}}\sum\nolimits_{h=0}^{\log\frac{1}{\eps}-1}\sum\nolimits_{q\in
desc(s,h)}\elc(q)=\sum\nolimits_{h=0}^{\log\frac{1}{\eps}-1}\sum\nolimits_{s\in
\mathcal{S}_h}\elc(s),$$
where the last equality is obtained by switching the order of summations and rearranging the second summation in terms of the levels.
\end{proof}

\subsection{Testing Border-Connectedness}
In this section, we state and analyze our adaptive border-connectedness subroutine after defining the concepts used in it. We state the guarantees of both border-connectedness subroutines in \Cref{lem:border-connectedness-success_probability}.

For the adaptive subroutine, we partition the square into diamonds and fences surrounding them, as described in \Cref{def:diagonal_lattice_pixels_and_regions}. The subroutine queries all pixels on the fences and categorizes diamonds into those whose black pixels are potentially connected to the border (set $B$ in \Cref{alg:diagonal_square_tester}) and those whose black pixels are definitely not (set $A$  in \Cref{alg:diagonal_square_tester}).  Then it tries to find a black pixel in a diamond from set $A$ or (using BFS) an isolated black component in diamonds from set $B$. Observe that either of the two provides evidence that the square is not border-connected.
In order to ensure that \Cref{alg:diagonal_square_tester} does not get stuck investigating a large connected component, BFS is stopped after a random number of steps, $x$. This idea was used for estimating the number of connected components in a graph by Berenbrink, Krayenhoff, and Mallmann-Trenn~\cite{BerenbrinkKM14}. However, they select $x$ from a different distribution: specifically, with $\Pr[x\geq j]= 1/j^2$, whereas we set $\Pr[x\geq j]= 1/j$. 

\begin{figure}
\centering
\includegraphics[width=0.35\linewidth]{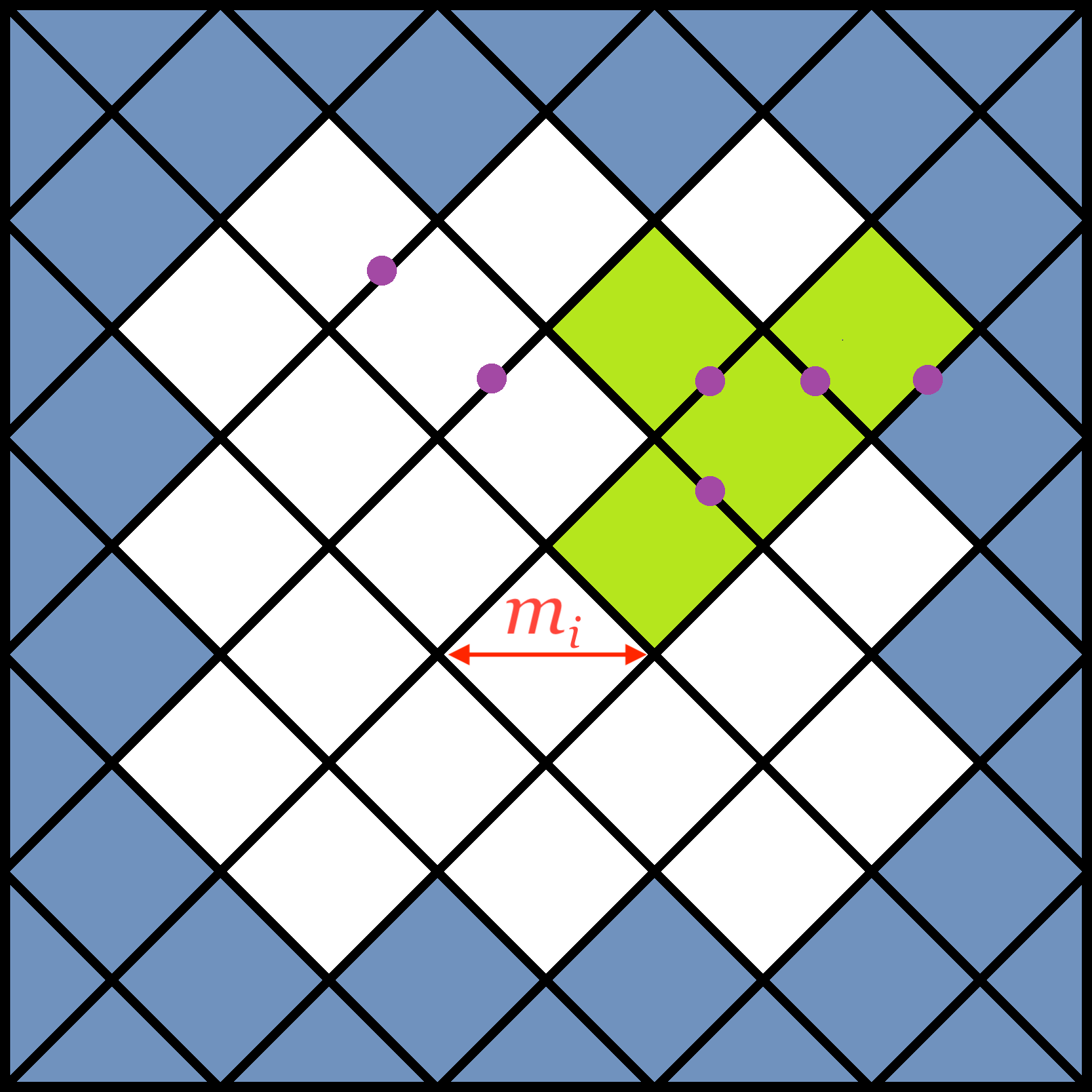}
\caption{
An example of execution of Algorithm 2. Black lines represent lattice pixels. The blue diamonds are included in $B$ because they contain pixels from the border of the square. The purple dots represent black pixels in the lattice. The green diamonds are added to $B$ during the BFS because their fences contain black pixels. The white diamonds remain in $A$.}
\label{fig:diamond}
\end{figure}

\begin{definition}[Diagonal lattice pixels, diamonds and fences]
\label{def:diagonal_lattice_pixels_and_regions}
For a fixed value of $i$, consider a square $s$ in $\mathcal{S}_{i}$. Let $m_{i}$ be the largest odd integer less than or equal to $\lceil\sqrt{k_{i}/\log k_{i}}\rceil$. \emph{Diagonal lattice pixels} of
the square is the set of pixels $L=\{(x,y)\in s\mid m_{i}|(x+y)$ or $m_{i}|(x-y)\}$. Let
$D$ be a $k_{i}\times k_{i}$ image whose pixels with coordinates from $[k_{i}]^{2}-L$ are black and the remaining pixels are white.
A set of pixels of the square whose corresponding pixels
in $D$ form a connected component is called a \emph{diamond} of the square.
A set of all diagonal lattice pixels that have some neighbouring
pixel(s) from a particular diamond is called the \emph{fence} of that diamond. 
\end{definition}
Observe that lattice pixels are not part of any diamond. Moreover, some diamonds (those that have pixels from the border of the square) are partial.

\begin{algorithm}
\caption{Border-connectedness subroutine \emph{Diagonal-Square-Tester}.}
\label{alg:diagonal_square_tester}
\SetKwInOut{Input}{input}\SetKwInOut{Output}{output}
\Input{level $i$, coordinates $u,v\in[n]$; access to
 an $n\times n$ matrix $M$.}
\DontPrintSemicolon
\BlankLine
Let $s$ be a square of level $i$ that consists of pixels $[k_{i}]^{2}+(u,v)$ and
$m_{i}$ be the largest odd integer less than or equal to $\lceil\sqrt{k_{i}/\log k_{i}}\rceil$.\;
\nl\label{step:diagonal lattice pixels}Query all the diagonal lattice pixels of $s$
(see
Definition~\ref{def:diagonal_lattice_pixels_and_regions}). \; 
\nl
Initialize $B$ to be the set of all diamonds of $s$ that contain
a border pixel of $s$. Initialize $A$ to be the set of the remaining diamonds of $s.$ 
\;
\nl
\label{step:alg-2-step-3}While $\exists d_1\in B$ and $\exists d_2\in A$ such that $d_1$ and
$d_2$ have a black pixel in the common portion of their fences, 
move $d_{2}$ from $A$ to $B$.\;
  \nl\label{step:sample-before-bfs}{\bf Repeat} $\frac1 2 k_{i}m_{i}$ 
times:
\begin{enumerate}[(a)]
    \item\label{step-sample} Sample a uniform pixel $p$ from the square $s$.
    \item\label{step:black-discovered} {\bf If} $p$ is a black pixel from a diamond in $A$ or its fence, \reject.
\item\label{step:BFS}  {\bf If} 
$p$ is
a black pixel from a diamond in $B$, 
pick a natural number $x\in[k_{i}^{2}]$ from the distribution with the probability
  mass function $f(j)=\frac {1}{j(j+1)}$ for all $j\in [k^2_i-1]$ and $f(j)=\frac {1}{j}$ for $j=k^2_i$. (Observe that $\Pr[x\geq j]=\frac{1}{j}$ for all $j\in[k_{i}^{2}]$.)
Starting from 
$p$, perform a BFS  of its connected component, halting if $x+1$ black pixels
in $p$'s component are discovered {\bf or} a black pixel on the border of square $s$ is reached.
{\bf If} the BFS  halts after discovering at most $x$ black pixels, 
  none of which are on the border of $s,$  
\reject.\;

  \end{enumerate}
\nl \Accept.
\end{algorithm}

Note that \Cref{alg:diagonal_square_tester} could reject right after Step~\ref{step:alg-2-step-3}
if any diamond in $A$ has a black pixel on its fence. However, we skipped this operation to keep the algorithm concise, since it is not needed to provide the guarantees of \Cref{thm:connectedness_tester}.

The success probability of \Cref{alg:diagonal_square_tester} is analyzed in \Cref{lem:border-connectedness-success_probability}. The intuition behind the analysis is that every square $s$ that has a high local cost (i.e., is far from border-connectedness) has either many black pixels in the diamonds of $A$ or many black connected-components in the diamonds of $B$. This is because we can ``fix'' $s$ by making all black pixels in $A$ white and by connecting all black connected components of $B$ to the border at the cost of recoloring at most $m_i$ white pixels per connected component.

Recall from Definition~\ref{def:Grid_pixels_squares_of_different_levels} that a witness is a square of one of the levels that violates border-connectedness.

\begin{lemma}
\label{lem:border-connectedness-success_probability}
Fix level  $i\in\integerset{\log\frac{1}{\eps}}$.
Let $s\in \mathcal{S}_{i}$ be a witness that consists of pixels
$[k_{i}]^{2}+(u,v)$.
A border-connectedness subroutine called by
Algorithm~\ref{alg:connectedness_tester} rejects $s$ with
probability at least $\frac{\elc(s)\cdot\alpha}{2k_{i}}$, where $\alpha=1$ for \emph{Exhaustive-Square-Tester} and $\alpha=1-e^{-1}$ for \emph{Diagonal-Square-Tester}.
\end{lemma}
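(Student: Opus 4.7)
For \emph{Exhaustive-Square-Tester} the claim is immediate: the subroutine queries every pixel of $s$ and computes border-connectedness exactly, so it rejects any witness with probability $1$, and the target bound $\elc(s)/(2k_i)\leq 1$ follows since $\elc(s)\leq 2k_i$ by definition. The interesting case is \emph{Diagonal-Square-Tester}, and my plan is to isolate two disjoint rejection events in Step~\ref{step:sample-before-bfs} and bound their probabilities in terms of structural quantities of $s$. After Step~\ref{step:alg-2-step-3}, every black pixel on the shared portion of the fences of an $A$-diamond and a $B$-diamond must actually be white (else the $A$-diamond would have moved). Let $b_A$ count the black pixels in $A$-diamonds or on their fences, and let $\nu$ count the connected components of the image graph of $s$ that are contained entirely in $B$-diamonds (together with the fences shared only by $B$-diamonds) and that contain no border pixel of $s$; call these the \emph{isolated $B$-components}.

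In each of the $T=\tfrac{1}{2}k_im_i$ iterations of Step~\ref{step:sample-before-bfs}, the sample pixel $p$ is drawn uniformly from the $k_i^2$ pixels of $s$. With probability $b_A/k_i^2$, $p$ is a black pixel caught by Step~\ref{step:black-discovered}. Conditional on $p$ lying in an isolated $B$-component $C$, Step~\ref{step:BFS} rejects iff the random $x$ satisfies $x\geq |C|$, which has probability $1/|C|$ by the construction of the distribution of $x$; summing over all isolated $B$-components yields contribution $\sum_j |C_j|\cdot \tfrac{1}{k_i^2}\cdot\tfrac{1}{|C_j|}=\nu/k_i^2$. The two events are disjoint, so per iteration the rejection probability is at least $(b_A+\nu)/k_i^2$, and across the $T$ independent iterations,
\[
\Pr[\reject]\;\geq\; 1 - \Bigl(1-\tfrac{b_A+\nu}{k_i^2}\Bigr)^T \;\geq\; 1 - \exp\!\Bigl(-\tfrac{(b_A+\nu)\,m_i}{2k_i}\Bigr).
\]

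The lemma then reduces to a structural inequality $\elc(s)\leq (b_A+\nu)\,m_i$. Granted this, set $y=(b_A+\nu)\,m_i/(2k_i)\geq \elc(s)/(2k_i)$: the elementary inequalities $1-e^{-y}\geq y(1-e^{-1})$ on $[0,1]$ and $1-e^{-y}\geq 1-e^{-1}$ for $y\geq 1$, together with $\elc(s)\leq 2k_i$, deliver the target $\elc(s)(1-e^{-1})/(2k_i)$. To prove the structural inequality I would exhibit an explicit modification of $s$ into a border-connected image of cost at most $b_A+\nu\,m_i$: first whiten all $b_A$ black pixels in the $A$-region, after which every surviving isolated component lies entirely in $B$-diamonds. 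For each such component $C$ sitting in a $B$-diamond $d$, one either (i) connects $C$ to a border pixel of $s$ inside $d$ by adding at most $m_i$ pixels when $d$ is a border diamond (possible because diamonds have Manhattan diameter $m_i$), or (ii) uses the chain of shared black fence pixels produced by Step~\ref{step:alg-2-step-3} to a predecessor $B$-diamond, paying at most $m_i$ pixels to extend $C$ across $d$ to that shared black fence pixel and then inducting along the chain until the modified component meets a border diamond.

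The main obstacle is making this amortization precise. A single isolated component may lie many diamonds away from the boundary of $s$, so the ``$\leq m_i$ per component'' accounting only works if each merge along the chain strictly decreases the number of surviving isolated components, guaranteeing that at most $\nu$ merges (each of cost $\leq m_i$) suffice in total. The careful case analysis needed here -- handling isolated components that already span several $B$-diamonds through black shared fence pixels, and handling the terminal step at the border diamond -- is the delicate part of the argument.
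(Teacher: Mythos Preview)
Your proposal is correct and follows essentially the same approach as the paper's proof: the same quantities (your $b_A$ and $\nu$ are the paper's $\mathbf a$ and $\mathbf b$), the same per-iteration rejection-probability computation $(b_A+\nu)/k_i^2$, the same structural inequality (the paper proves the slightly sharper $\mathbf a+\mathbf b\,m_i\ge \lc(s)$, from which your $(b_A+\nu)m_i\ge\elc(s)$ follows since $m_i\ge1$), and the same final calculus step via $1-e^{-y}\ge y(1-e^{-1})$ on $[0,1]$.

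On the part you flag as ``delicate,'' the paper handles the amortization more directly than your chain/merge sketch: it processes the diamonds \emph{in the order they were added to $B$}, maintaining the invariant that every component meeting a diamond processed so far is already connected to the border. When $d_2$ is moved to $B$ via a shared black fence pixel $\beta$ with some $d_1$ already in $B$, the pixel $\beta$ is (by the invariant) already connected to the border, so each still-unconnected component intersecting $d_2$ is joined to $\beta$ at cost at most $m_i$. This makes the ``$\le m_i$ per component'' accounting immediate, without tracking long chains or worrying about components spanning several diamonds---the first diamond (in the addition order) that such a component meets is the one at which it gets connected.
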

\begin{proof}
\emph{Exhaustive-Square-Tester} determines that $s$ is a witness with
probability $1\geq \frac{\elc(s)}{2k_{i}}\cdot 1$. 

Now we prove the statement
for \emph{Diagonal-Square-Tester}. Let $A$ and $B$ be defined as in Algorithm~\ref{alg:diagonal_square_tester} after Step~3.
Let $\nA$ be the number of black pixels in all the diamonds of the set $A$ and on the fences of these diamonds.
Consider the connected components of the image graph that are formed
by black pixels in all the diamonds in the set $B$ and in their fences. Let $\cal{K}$ be the set of all such connected components that contain no pixels on the border of $s$, and define $\nB=|\cal{K}|.$ 

Next, we prove that 
\begin{align}\label{eq:diamond-connect-cost}
\nB m_{i}+\nA \geq \lc(s) \geq \elc(s).
\end{align}
The second inequality in \eqref{eq:diamond-connect-cost} holds by \Cref{def:elc}. To prove the first, we will show how to obtain a square in $\C'$ by modifying at most $\nB m_i+\nA$ pixels in $s$.

We claim that we can connect all $\nB$ connected components to each other and to the border of the square by modifying at most $\nB m_{i}$ pixels. To see this, notice that any two black pixels in the same diamond 
can be connected to each other by changing less than $m_i$ pixels (by taking any Manhattan-distance shortest path between the two pixels and making it all black). To prove the claim, we can connect the $\nB$ connected components to the border of the square in the order their diamonds were added to $B$ by the algorithm. The initial diamonds placed in $B$ have at least one pixel on the border of the square, so their connected components can be directly connected to the border (using at most $m_i$ pixels per connected component). Now assume that we already connected to the border all components that have pixels in the diamonds added to $B$ so far. When the algorithm moves some diamond $d_2$ from $A$ to $B$, it is done because there is already a diamond $d_1$ in $B$ such that $d_1$ and $d_2$ have a black pixel $\beta$ in the common portion of their fences.  Then $\beta$ must be already connected to the border. If there are any connected components in $d_2$ that are not connected to the border yet, we can fix that by connecting them to $\beta$ (using at most $m_i$ pixels per connected component). We proceed like this until all $\nB$ connected components of the diamonds that were added to $B$ by the algorithm are connected to the border of the square. At this point, we have changed at most $\nB m_i$ pixels.

Finally, we color white all the $\nA$ black pixels in the diamonds of the set $A$ and on their fences. 
Altogether, after these at most $\nB m_i+\nA$ modifications to $s$, we obtain a square in
$\C'$. Thus, \eqref{eq:diamond-connect-cost} holds.

Next, we use \eqref{eq:diamond-connect-cost} to analyze the success probability of \Cref{alg:diagonal_square_tester}. Observe that for $x\in[0,1]$,
\begin{align}\label{eq1}
x\geq 1-e^{-x}\geq x(1-e^{-1})> x/2. 
\end{align}
Consider one iteration of Step~\ref{step:sample-before-bfs} of Algorithm~\ref{alg:diagonal_square_tester}. Recall that $p$ is the pixel selected in that step. Let $E_1$ be the event that $p$ is one of the $\nA$ black pixels in the diamonds of the set $A$.
Let $E_2$ be the event that $p$ is in
one of the 
connected components 
in $\cal{K}$
and that its connected component is completely discovered by the BFS in Step~\ref{step:sample-before-bfs}\ref{step:BFS}. Then the probability that one iteration  
of Step~\ref{step:sample-before-bfs} of 
Algorithm~\ref{alg:diagonal_square_tester} rejects square $s$ is equal to $\Pr[E_1]+\Pr[E_2]$. 
Since $p$ is chosen uniformly from a square of size $k_i\times k_i$, $$\Pr[E_1]= \frac \nA {k^2_i}.$$
Since $p$ and $x$ are chosen independently,
    $$\Pr[E_2]
    = \sum_{C\in \cal{K}} \Pr_p[p\in C]\cdot \Pr_x\big[x\geq |C|\big]
    = \sum_{C\in \cal{K}} \frac{|C|}{k^2_i}\cdot \frac 1 {|C|}
    =\sum_{C\in \cal{K}}\frac 1 {k^2_i}
    =\frac \nB {k^2_i}.$$

Then, by \eqref{eq:diamond-connect-cost} and \eqref{eq1} and since $\frac{elc(s)}{2k_i}\leq 1$ by \Cref{def:elc}, the probability that all $\frac1 2 k_im_i$ iterations  of Step~\ref{step:sample-before-bfs} of Algorithm~\ref{alg:diagonal_square_tester} complete 
without rejecting is

 $$\left(1-\frac{\nA+\nB}{k^2_i}\right)^{\frac1 2k_im_i}\leq e^{-\frac{\nA+\nB}{2k_i}\cdot m_i}\leq e^{-\frac{\nA+\nB m_i}{2k_i}}\leq e^{{-\frac{elc(s)}{2k_i}}}\leq 1-\frac{elc(s)}{2k_i}(1-e^{-1}).$$ 
Therefore, Algorithm~\ref{alg:diagonal_square_tester}  rejects $s$ with probability at least $$\frac{elc(s)}{2k_i}(1-e^{-1}),$$
completing the proof of \Cref{lem:border-connectedness-success_probability}
\end{proof}

\subsection{Proof of \Cref{thm:connectedness_tester}}
In this section, we use Lemmas~\ref{lem:sum_of_local_costs} and~\ref{lem:border-connectedness-success_probability} to complete the analysis of our connectedness tester and the proof of \Cref{thm:connectedness_tester}.

\Cref{alg:connectedness_tester} always accepts connected images,
since it sees no violation of $\C'$ in Step~\ref{step:square-test}\ref{step:test-for-border-connectedness}. Consider an image $M$ that is $\eps$-far from connectedness. Fix level $i\in\big[0..\log\frac 1 \eps\big)$ and one iteration of Step~\ref{step:square-test} of \Cref{alg:connectedness_tester}.
 Let $s$ be a square of level $i$ and $E_s$ be the event that,

 in this iteration, $s$ is selected in Step~\ref{step:square-test}\ref{step:sample-squares} and rejected by the subroutine in Step~\ref{step:square-test}\ref{step:test-for-border-connectedness}.
 Observe that for every $i\in\integerset{\log\frac{1}{\eps}}$, there are $\frac{n^{2}}{k_{i}^{2}}$ squares in $\mathcal{S}_{i}$. By \Cref{lem:border-connectedness-success_probability}, $$\Pr[E_s]\geq\frac{k_{i}^{2}}{n^{2}}\cdot\frac{\alpha \cdot elc(s)}{2k_i}=\frac{\alpha\cdot k_i\cdot elc(s)}{2n^2}.$$
 Let $E$ be the event that,
 in this iteration, the subroutine in Step~\ref{step:square-test}\ref{step:test-for-border-connectedness} rejects.
Since events $E_s$ are disjoint and $E=\bigcup_{s\in\mathcal{S}_i} E_s,$  
$$\Pr[E]=\sum_{s\in\mathcal{S}_i} \Pr[E_s]\geq\frac{\alpha\cdot k_i}{2n^2}\cdot \sum_{s\in\mathcal{S}_{i}} elc(s).$$ 
Let $\mathcal S$ be the set of squares of all levels $i\in[0..\log\frac 1 \eps)$. Recall that $k_i=\frac{4}{\eps}\cdot 2^{-i}-1$ and $\frac 1 {\eps}$ is a power of 2. Then, $\log \frac 1{\eps}$ is an integer and $i\leq \log \frac 1{\eps}-1$. Consequently, $k_i\geq \frac{3.5}{\eps}\cdot 2^{-i}$. By independence, the fact that $1-x\leq e^{-x}$ for all $x$, and \Cref{lem:sum_of_local_costs}, the
overall probability that the subroutine accepts in all iterations is at most

\begin{align*}
\prod_{i\in\integerset{\log\frac{1}{\eps}}} \left(1-\frac{\alpha\cdot k_i}{2n^2}\cdot \sum_{s\in\mathcal{S}_{i}} elc(s)\right)^{2^{i+1}}
&\leq \prod_{i\in\integerset{\log\frac{1}{\eps}}} \exp\left(-\frac {\alpha}2 \cdot \frac{3.5}{\eps n^2 2^i}\cdot \sum_{s\in\mathcal{S}_{i}} elc(s)\cdot 2^{i+1}\right)\\
=\exp\left(-\frac{3.5\alpha}{\eps n^2}\cdot \sum_{s\in\mathcal{S}} els(s)\right)
&\leq \exp\left(-\frac{3.5\alpha}{\eps n^2}\cdot \frac{\eps n^2}{2}\right)=e^{-1.75\alpha}.
\end{align*}

Therefore, the probability that \Cref{alg:connectedness_tester} detects at
least one witness 
is at least $1-e^{-1.75\alpha}$. Since at least an $\eps$
fraction of pixels in $M$ are black and every square of every level contains at most
$\frac{16}{\eps^{2}}\leq\frac{\eps n^{2}}{4}$ pixels (recall that $n\geq 8\eps^{-3/2}$ in the premise of \Cref{thm:connectedness_tester}), the probability that Algorithm~\ref{alg:connectedness_tester} detects a black pixel outside of that witness in Step 1 is at least
$1-(1-\frac{3\eps}{4})^{\frac{8}{\eps}}>1-e^{-6}$. Thus, for both values of $\alpha$, the probability
that Algorithm~\ref{alg:connectedness_tester} rejects $M$ is at least $$(1-e^{-6})(1-e^{-1.75\alpha})\geq 2/3,$$
completing the analysis of the success probability of \Cref{alg:connectedness_tester}.

\paragraph{Query Complexity.} 
Step~\ref{step:query-pixels} of Algorithm~\ref{alg:connectedness_tester} makes $O(\frac 1 \eps)$ queries.
We first prove that Algorithm~\ref{alg:connectedness_tester} has query
complexity $O(\frac{1}{\eps^{2}})$ if it uses \emph{Exhaustive-Square-Tester}
as a subroutine. In Steps~\ref{step:go-through-all-i}--\ref{step:square-test},
Algorithm~\ref{alg:connectedness_tester} samples $2^{i+1}$ squares of each level
$i\in[0..\log\frac{1}{\eps})$ and, for each sampled square, it calls
\emph{Exhaustive-Square-Tester} which makes
$(\frac{4}{\eps}\cdot2^{-i}-1)(\frac{4}{\eps}\cdot2^{-i}-1)<\frac{16}{\eps^{2}2^{2i}}$ queries in each sampled square of level $i$.
Thus, the query complexity of Steps~\ref{step:go-through-all-i}--\ref{step:square-test} is $$\sum\nolimits_{i=0}^{\log\frac{1}{\eps}} 2^{i+1}\cdot \frac{16}{\eps^{2}2^{2i}}<\sum\nolimits_{i=0}^{\log\frac{1}{\eps}}\frac{32}{\eps^{2}2^{i}}=O\left(\frac{1}{\eps^{2}}\right),$$
and the total query complexity of \Cref{alg:connectedness_tester} is also $O(\frac 1{\eps^2}).$

When Algorithm~\ref{alg:connectedness_tester} uses \emph{Diagonal-Square-Tester}, it queries at most
$\frac{2k_{i}^2}{m_{i}}$ diagonal lattice pixels inside each square of
level $i$ (in Step \ref{step:diagonal lattice pixels} of the subroutine). After that, in Step~\ref{step:sample-before-bfs} of the subroutine, it selects $\frac 1 2{k_{i}m_{i}}$ pixels and a number
$x\in[k_{i}^{2}]$ from the specified distribution and then makes at
most $4x$ queries for each selected pixel. Observe that $\E[x]=O(\log k_{i})$. Thus, the expected number of queries inside a square of level $i$ is at most
$\frac{2k_{i}^2}{m_{i}}+\frac 1 2k_{i}m_{i}\cdot 4\cdot O(\log
k_{i})=O(k_{i}^{3/2}\sqrt{\log k_{i}})$. The expected total number of queries is
$\sum\nolimits_{i=0}^{\log(1/\eps)}O(k_{i}^{3/2}\sqrt{\log
k_{i}})\cdot 2^{i+1}=O(\eps^{-3/2}\sqrt{\log \frac{1}{\eps}})$.

By standard arguments, the adaptive version of Algorithm~\ref{alg:connectedness_tester} can be converted to an algorithm that makes asymptotically the same number of queries in the worst case,  and has the same accuracy guarantee and running time.

\paragraph{Running Time.} 
To analyze the running time, we assume that individual queries and basic arithmetic operations can be performed in constant time.
The time 
complexity of Step 1 of Algorithm~\ref{alg:connectedness_tester} is $O(\frac{1}{\eps})$.
Therefore, the total time complexity of Algorithm~\ref{alg:connectedness_tester} is
$O(\frac{1}{\eps})$ plus the time complexity of Steps~\ref{step:go-through-all-i}--\ref{step:square-test}. In Step~\ref{step:square-test}\ref{step:test-for-border-connectedness}, Algorithm~\ref{alg:connectedness_tester}
uses either \emph{Exhaustive-Square-Tester} or \emph{Diagonal-Square-Tester}.

\emph{Exhaustive-Square-Tester} can be implemented by running a BFS on the image graph for the sampled square to find a proof of disconnectedness.
BFS takes time that is linear in the sum of the number of edges and the number of nodes of the graph.
Every pixel of a sampled square has at most $4$ neighboring pixels.
Thus, the number of edges in the image graph of every sampled square is linear in the number of pixels inside it and the time complexity of
Steps~\ref{step:go-through-all-i}--\ref{step:square-test} is linear in the number of all queried pixels, i.e.,
$O(\frac{1}{\eps^{2}})$ when \emph{Exhaustive-Square-Tester} is used.

Next, we argue that \emph{Diagonal-Square-Tester} takes time linear in the number of pixels it queries. The number of diamonds it considers is at most linear in the number of queried pixels. Step~\ref{step:alg-2-step-3}, where \emph{Diagonal-Square-Tester} partitions the diamonds into sets $A$ and $B$, can be implemented by running a BFS 
on the graph where the nodes correspond to the diamonds, and two diamonds are adjacent if there is a black pixel on their shared border. This BFS takes linear time because the graph has constant degree. In Step~\ref{step:sample-before-bfs}, we check for every queried black pixel what type of diamond it is in. This check can be performed in $O(1)$ time. If the pixel is in a diamond from set $B$, this pixel is used as part of a BFS in Step~4\ref{step:BFS} on the image graph. As we argued before, a BFS on the image graph takes time linear in the number of pixels queried by the BFS.
Thus, the time complexity of
Steps~\ref{step:go-through-all-i}--\ref{step:square-test} of Algorithm~\ref{alg:connectedness_tester} is linear in the number of all queried pixels, i.e.,
$O(\eps^{-3/2}\sqrt{\log (1/\eps)})$ when \emph{Diagonal-Square-Tester} is used. 
This completes the proof of \Cref{thm:connectedness_tester}.

\section{Lower Bound for Testing Connectedness}

In this section, we give a lower bound on the query complexity of testing connectedness, proving
\Cref{thm:nonadap-bound-connectedness}. We use the standard setup of constructing a distribution $\dN$ on $\eps$-far inputs such that every deterministic nonadaptive algorithm that makes $q\leq\frac c \eps \log \frac 1 \eps$ queries (for some constant $c$) has probability of error greater than 1/3. Then \Cref{thm:nonadap-bound-connectedness} follows by Yao's Principle~\cite{Yao}.

\subsection{Construction of the Hard Distribution}\label{sec:construction}

\begin{figure}[ht]
\centering
\includegraphics[width=0.35\linewidth]{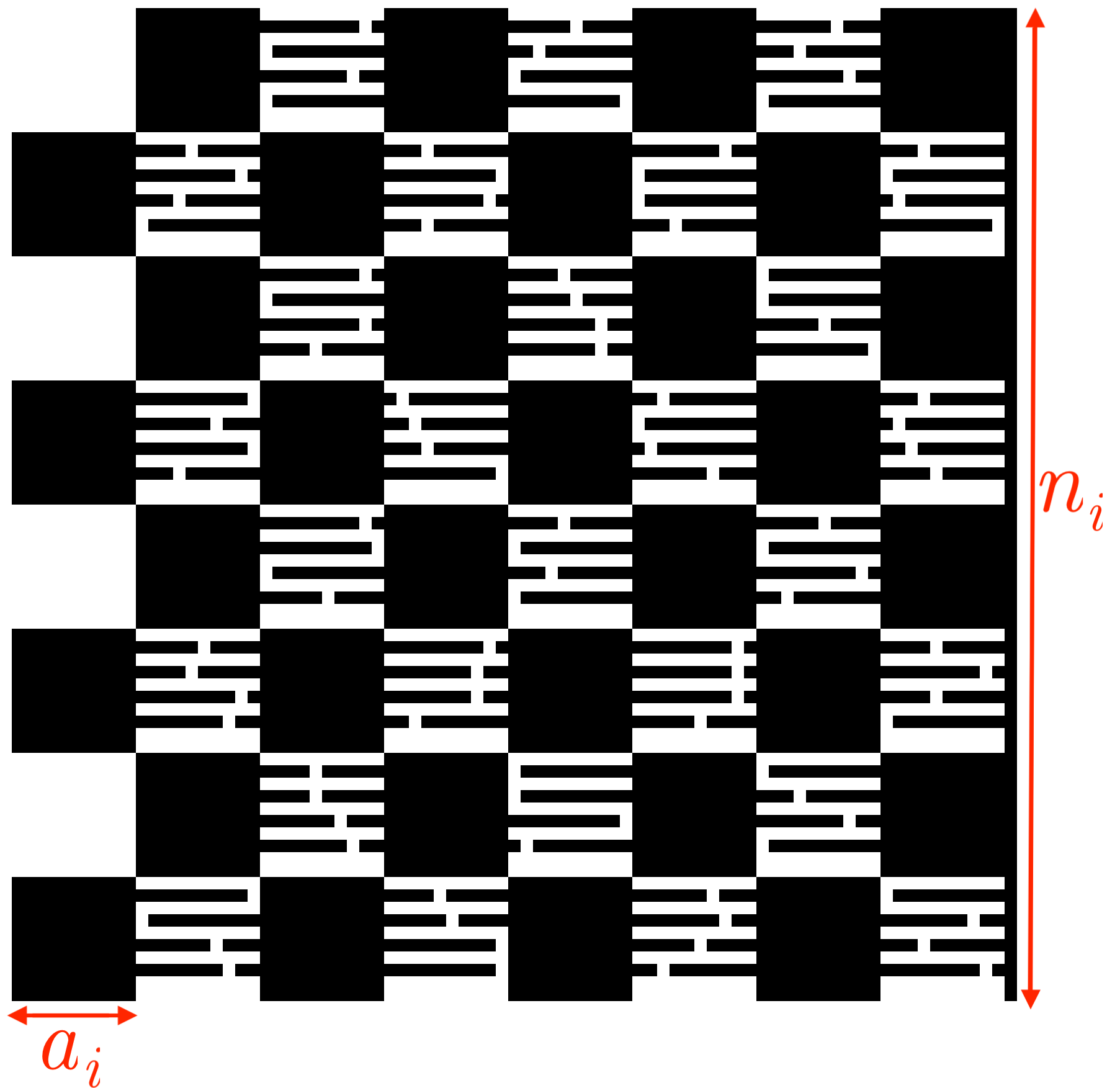}
\caption{Our construction of distribution $\dN$: an interesting window together with a column of black pixels immediately to the right of it. All other pixels in the constructed image are white.}
\label{fig:interesting-square}
\end{figure}

 The construction is parameterized by a positive integer $n$ and the proximity parameter $\eps\in(0,1)$. It gives a distribution supported on $(n+1)\times(n+1)$ images that are $\eps$-far from connectedness. We assume that 
 $n>\frac 1{16}(\frac 1\eps)^{5/8}.$ 
 
 We also assume that $n$ is a power of $2$ and $1/\eps$ is an even power of $2$ and that both of them are sufficiently large, so that all indices in our construction are integers. (See also Footnote~\ref{footnote:integrality-issues} for the discussion of integrality issues.) 
Our construction starts by selecting a {\em level}. The indices of the levels range from the low index $\ell=\frac 1 8\log \frac 1 \eps$ to the high index $h=\frac 1 4\log \frac 1 \eps$. For all indices $i\in \{\ell,\ell+1,\dots, h\}$, define $a_{i}=2^i$ and $n_{i} = 16 \cdot \sqrt{\epsilon} \cdot n \cdot a_{i} = \sqrt{\epsilon} \cdot n \cdot 2^{i+4}$.
First, we pick a uniformly random integer 
$i\in \{\ell,\ell+1,\dots, h\}$.
Consider the $n \times n$ image resulting from removing the last row and the last column of the $(n+1)\times(n+1)$ image. We partition this $n\times n$ image into $(\frac{n}{n_i})^2$ squares with side length $n_i$ called {\em windows of level $i$}; that is, each window of level $i$ is an $n_i\times n_i$ subimage. We pick one of the windows of level $i$ uniformly at random and call it an {\em interesting window}. We make the $n_i$ pixels immediately to the right of the interesting window black,  representing a vertical black line segment, and all other pixels  outside of the interesting window white.
 
 Now we describe how to color the interesting window. 
 
 See Figure~\ref{fig:interesting-square} for an illustration.
 
We  fill the interesting window with a checkerboard pattern 
consisting of 
$a_{i} \times a_{i}$ squares.
Inside each white checkerboard square that is not in the first column, number the rows starting from 0 and make every odd row, 
excluding the last, fully black, except for one randomly selected pixel for each row. The resulting $\frac{a_i}{2}-1$ black lines, each containing one white pixel, are called {\em bridges}. The white pixels on the bridges are called {\em disconnecting pixels.} We refer to each checkerboard square with the bridges as a {\em bridge square}. 

The intuition behind the construction is the following. Each black checkerboard square is in its own connected component. However, to ``catch'' this connected component as a witness of disconnectedness, a tester would have to query all the disconnecting pixels in the bridge squares to the left and/or to the right of the black square. Since the positions of the disconnecting pixels are random, it would have to query $\Omega(a_i^2)$ pixels in at least one relevant bridge square. To catch a witness of disconnectedness for a specific level, the algorithm would have to do it for many windows of that level because the interesting window is positioned randomly.
The key feature of our construction is that interesting windows of different levels are either disjoint or contained in one another, so potential witnesses for one window cannot significantly help with another. More precisely, we prove that a tester must query $\Omega(\frac{1}{\eps})$ pixels per level. Since there are $\Theta(\log \frac{1}{\eps})$ levels, this leads to the $\Omega(\frac{1}{\eps} \log{\frac{1}{\eps}})$ final query complexity. 

Note that it is essential that the lower bound is for nonadaptive testers, since an adaptive tester could use $O(\frac{1}{\eps})$ queries to identify the level and the position of the interesting window, followed by $O(\frac{1}{\eps})$ queries to check one of the bridges inside the interesting window.

\subsection{Analysis of the Construction}

We start by showing that all images in the support of $\dN$ are far from connected.

\begin{lemma} Every image in the support of $\dN$ is $\eps$-far from connected.
\end{lemma}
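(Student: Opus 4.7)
The plan is to lower bound $\Dis(M,\C)$ by counting the connected components of $M$'s image graph and then invoking the elementary fact that each single-pixel flip changes the number of components by at most $3$.

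Let $m = n_i/a_i = 16\sqrt{\eps}\,n$, so the interesting window is partitioned into an $m\times m$ checkerboard of $a_i\times a_i$ squares, exactly $m^2/2$ of which are solid black blocks. I claim every black checkerboard square lying in a column $c<m-1$ of the checkerboard is, by itself, a connected component of the image graph of $M$. First-column white squares contain no black pixels, so they supply no connecting path. A bridge square placed horizontally between two black squares has a disconnecting white pixel on every bridge, so the left portion of each bridge (which belongs to the left black square's component) is separated from the right portion (which belongs to the right black square's component); hence horizontally adjacent black squares sit in distinct components. Vertically, the construction puts bridges only on odd rows of the bridge square, excluding the last, so row $0$ and row $a_i-1$ of every bridge square are entirely white, blocking any vertical path through a bridge square. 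Only the $m/2$ black squares in the rightmost checkerboard column merge with the external black column (which sits in the column immediately to their right) into a single ``big'' component. Thus
\[
C(M) \;\geq\; \frac{m^2}{2} - \frac{m}{2} + 1 \;=\; \frac{m(m-1)}{2} + 1.
\]

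Next, I would invoke the standard observation that any single-pixel flip changes the number of black connected components by at most $3$: adding a black pixel merges at most four components (one per neighbor) into one, and removing a pixel can split one component into at most four pieces. Hence transforming $M$ into an image with at most one black component requires
\[
\Dis(M,\C) \;\geq\; \frac{C(M)-1}{3} \;\geq\; \frac{m(m-1)}{6} \;=\; \frac{256\,\eps\,n^2 - 16\sqrt{\eps}\,n}{6}.
\]
A direct arithmetic check using the standing assumption $n > \frac{1}{16}\eps^{-5/8}$ shows the right-hand side exceeds $\eps(n+1)^2$, yielding $\dis(M,\C) \geq \eps$.

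The main obstacle will be making the component count rigorous: verifying through a careful case analysis that the disconnecting pixel in each bridge, regardless of where it lands within its bridge row (including the extreme positions at column $0$ or column $a_i-1$ of the bridge square), never produces an unintended merger of two black checkerboard squares, and handling the rightmost-column bridge squares whose bridges' right halves extend to touch the external column. Once the component count is established, the rest is a direct arithmetic estimate.
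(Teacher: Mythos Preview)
Your proof is correct and follows essentially the same approach as the paper's: both count the connected components of the image graph (obtaining $128\eps n^2 - 8\sqrt{\eps}\,n$, up to $\pm 1$), invoke the fact that a single pixel flip changes the component count by at most $3$, and then verify the resulting arithmetic under the standing assumption $n>\tfrac{1}{16}\eps^{-5/8}$. Your version fills in more of the verification that each black checkerboard square is isolated (white boundary rows of bridge squares, extreme disconnecting-pixel positions, the rightmost column merging with the external line), but the paper's proof is just a terser statement of the identical argument.
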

\begin{proof}
For every $i \in\{\ell, \ldots, h\}$, there are $\frac{(n_i / a_i)^2}{2} = 128 \epsilon n^2$ black regions. Each of them is in a separate connected component, except for the last $\frac{n_i / a_i}{2}=8\sqrt{\epsilon} \cdot n$, which are connected by the vertical line. Thus, the image graph has at least $
128 \cdot \epsilon n^2 - 8\sqrt{\epsilon} \cdot n$ connected components.
 Changing one pixel from white to black corresponds to adding a node of degree at most 4 to the image graph. This decreases the number of connected components by at most 3. Removing a pixel decreases the number of connected components by at most 1. 
Consequently, overall, we need to change at least $\frac 13{(128\epsilon n^2 - 8\sqrt{\epsilon} \cdot n)}$ pixels.
This is at least $\epsilon n^2$ for sufficiently large $n$; in particular, it holds for $n>\frac 1{16}(\frac 1\eps)^{5/8}$, which was our assumption in the beginning of \Cref{sec:construction}.   
\end{proof}
Next we show that if the number of queried pixels is small, then every 1-sided error deterministic algorithm detects a violation of connectedness in an image distributed according to $\dN$ with insufficiently small probability.

\begin{lemma}\label{lm:hard-to-find-witness}
Let $M$ be an image distributed according to $\dN$. Fix a deterministic nonadaptive 1-sided error algorithm $\mathcal A$ for testing connectedness of images. Let $\bf Q$ be the set of pixels queried by $\mathcal A$ and let $q=|\bf Q|$. For sufficiently small constant $c>0$, if $q\leq \frac c{\eps}\log \frac1 {\eps}$, then $\mathcal A$ detects a violation of connectedness in $M$ with probability less than $1/3$.

\end{lemma}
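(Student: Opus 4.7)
The plan is to apply Yao's principle: fix a deterministic nonadaptive $1$-sided-error tester $\mathcal A$ with query set $Q$ of size $q$, and prove $\Pr_{M\sim\dN}[\mathcal A \text{ rejects } M] < 1/3$ whenever $q \le \tfrac{c}{\eps}\log\tfrac{1}{\eps}$ for a sufficiently small constant $c > 0$. The first step is to characterize when $\mathcal A$ rejects. Because $\mathcal A$ has $1$-sided error, it rejects only on a \emph{certificate of disconnection}: a set of observed pixel values that is inconsistent with every connected image. Equivalently, if we complete the image by setting every unqueried pixel to black --- the completion that maximizes connectivity --- then the observed white pixels must form a cut of the grid whose complement has at least two components each containing an observed black pixel. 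In the support of $\dN$, there are essentially two kinds of cheap certificates: (i) a chain of observed white pixels encircling a single black checkerboard square by weaving through the four surrounding bridge squares, which necessarily passes through the unique disconnecting pixel of every bridge it crosses; and (ii) a chain of observed white pixels encircling the entire interesting window together with the adjacent vertical black line.

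For each level $i \in \{\ell,\ldots,h\}$, I would condition on the random window position $p \in [W_i]$ and on the independently uniform disconnecting-pixel positions, and bound the conditional rejection probability by union-bounding over the two certificate types. For a type-(i) certificate keyed to a bridge square $B$ in the interesting window, if $\mathcal A$ places $t_{B,b}$ queries on bridge $b\in B$, then the probability it observes every disconnecting pixel of $B$ equals $\prod_{b \in B}(t_{B,b}/a_i)$, which AM-GM bounds by $\bigl(2T_B/a_i^2\bigr)^{a_i/2 - 1}$ with $T_B = \sum_b t_{B,b}$. Summing over all bridge squares inside all $W_i$ windows --- exploiting that $\sum_B T_B \le q$ and using the convexity inequality $\sum_B T_B^k \le (\sum_B T_B)^k$ --- produces a per-level bound in terms of $\eps$, $q$, and $a_i$. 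A parallel argument handles type~(ii); the vertical black line was placed precisely to rule out a simple outline witness, forcing any candidate outline to detour around it, and the pixels relevant to an outline at level $i$ lie on the alignment grid at step $n_i$, which differs across levels.

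Averaging over the uniformly random level $i$ then completes the bound. The main obstacle, and the delicate part of the proof, is to show that $\mathcal A$'s queries cannot effectively be shared across levels: although a level-$j$ window geometrically contains many level-$i$ windows for $j > i$, the bridge decompositions, bridge sizes, and random disconnecting-pixel positions are sampled independently at each level, so a query that contributes evidence toward a level-$i$ certificate contributes essentially nothing toward a level-$j$ one. Quantifying this non-reusability forces the tester to spend $\Omega(1/\eps)$ queries on each of the $L = \Theta(\log \tfrac{1}{\eps})$ levels, for an aggregate of $\Omega(\tfrac{1}{\eps}\log\tfrac{1}{\eps})$; otherwise the rejection probability stays below $1/3$, and Yao's principle yields \Cref{lm:hard-to-find-witness}.
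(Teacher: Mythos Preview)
Your proposal correctly identifies the AM--GM step that bounds the probability of observing all disconnecting pixels in a bridge square with few queries, and you correctly identify that the heart of the proof is showing queries cannot be reused across levels. But the proposal does not actually supply that argument; you call it ``the main obstacle'' and ``the delicate part'' and then stop. That is precisely the content of the lemma.

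Concretely, your sketch breaks if the tester concentrates $\Theta(a_i^2)$ queries in a single cell. In that case the AM--GM bound $\bigl(2T_B/a_i^2\bigr)^{a_i/2-1}$ is useless (it exceeds~$1$), and the ``convexity inequality'' $\sum_B T_B^k\le(\sum_B T_B)^k$ gives nothing either. The paper resolves this with a combinatorial charging scheme you do not have: call a cell of level $i$ \emph{covered} if it contains at least $a_i^2/8$ queries, call a window \emph{good} if it contains a covered cell of level $\ge i$, and then associate good windows to \emph{maximal} covered cells via a top-down procedure. This procedure yields two facts: (i) $q\ge\sum_i a_i^2 g_i/8$ where $g_i$ counts associated good windows of level $i$, and (ii) a telescoping relation $g_i\ge t_i-t_{i+1}$ between associated and all good windows. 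Together these give $q\gtrsim\sum_i 4^i t_i$, which (since the number of windows at level $i$ is $\Theta(1/(4^i\eps))$) directly bounds $\Pr[\text{interesting window is good}]$ by $O(q\eps/\log\tfrac1\eps)<1/6$. Conditioned on the window being bad, every cell has fewer than $a_i^2/8$ queries and your AM--GM bound now applies cleanly to give $\Pr[E\mid\overline G]<1/6$. The good/bad split is not optional; it is what makes the AM--GM step valid.

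A secondary issue: your type-(ii) certificate (``encircle the whole window plus the vertical line'') and its analysis in terms of ``the alignment grid at step $n_i$'' do not correspond to anything in the construction. In $\dN$, any white cut separating two observed black pixels must pass vertically through some bridge square, which forces observing every disconnecting pixel of that square; the paper accordingly reduces rejection to the single event ``$\mathbf Q$ contains a revealing set.'' There is no separate outline analysis to do.
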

\begin{proof}
For each level $i\in[\ell,h]$, consider the partition of the image into squares with side length $a_i$ and call each such square a {\em cell} of level $i$. Note that when image $M$ is drawn from $\dN$, the bridge squares of level $i$ are selected from cells of level $i$.
An important feature of our construction is that the largest cell is smaller than the smallest window. Indeed, the side length of the largest cell is $a_h=(\frac 1\eps)^{1/4}$, whereas the side length of the smallest window is $n_\ell= 16\sqrt{\eps}\cdot (\frac 1 \eps)^{1/8}n=16 \eps^{3/8}n$. Thus, $a_h<n_\ell$, as long as  $n>\frac 1{16}(\frac 1\eps)^{5/8}$, as we assumed in the beginning of \Cref{sec:construction}. The consequence of this feature and the fact that both $a_i$'s and $n_i$'s are powers of 2 is that each cell of level $i$ is contained in one window of level $j$ for all levels $i,j\in \{\ell,\ell+1,\dots,h\}.$

Recall that set $\bf Q$ of queries is fixed, and that bridge squares are chosen randomly when the image $M$ is drawn from $\dN$. Intuitively, in order to certify that $M$ is disconnected, the tester has to query many points in some bridge square. To accomplish this, $\bf Q$ might contain many queries from some cells. However, when $\bf Q$ contains a small number of queries (specifically, at most $\frac c{\eps}\log \frac1 {\eps}$ queries), only a small number of cells can contain many queries.
We classify windows as good or bad based on whether they contain a cell with sufficiently many queries in $\bf Q$. We give up on good windows in the sense that we analyze the failure probability of the algorithm only for the case when a bad window is chosen as an interesting window. We show that the probability of failure for that case is already too high.

\begin{definition}[Covered cells, good windows]
A cell of level $i\in \{\ell,\ell+1,\dots,h\}$ is {\em covered} if ${\bf Q}$ contains at least ${a_i^2}/8$ pixels from that cell. A window of level $i$ is {\em good} if it contains a covered cell of level at least $i$; otherwise, it is {\em bad}.\footnote{We defined a good window of level $i$ with respect to covered cells of level $i$ or above, as opposed to just level $i$. Observe that if a cell of level $i+1$ is covered, then at least one of the four cells of level $i$ contained in the considered cell is also covered. Consequently, if a window is good, it contains a covered cell of its own level. While our choice of ``level $i$ or above'' in the definition does not change its substance compared to just ``level $i$'', it highlights that larger cells can make a window good. This is important in the description and analysis of the association procedure in the next part of the proof.}
\end{definition}

The difficulty with counting queries needed for testing images from $\dN$ is that some queries can help with covering different cells. To resolve this issue, we define {\em maximal} cells and give a procedure that associates many good windows with maximal cells. We call a cell {\em maximal} if it is covered and not contained in another covered cell. Now we associate some of the good windows with unique maximal cells. To do this, {\em we go through levels from highest to lowest, and for each level $i\in[\ell,h]$, we consider good windows in an arbitrary order. If for a considered window $w$, all of the maximal cells of level at least $i$ inside $w$ are not associated with any previously considered window, we associate $w$ with an (arbitrary) one of those maximal cells.
This association allows us to avoid overcounting queries needed when considering different levels.} All windows of the same level are associated with different covered cells, because each cell is contained in exactly one window of a given level. 

\begin{claim}[Guarantees of the Association Procedure]\label{claim:properties-of-good-windows}
For every level $i\in \{\ell,\ell+1,\dots,h\},$ let $G_i$ be the set of good windows of level $i$ associated with 
maximal cells, let $T_i$ be the set of all good windows of level $i$, and define $g_i=|G_i|$, and $t_i=|T_i|.$
Then the following hold:
\begin{enumerate}
\item\label{item:lb-on-q} 
$\displaystyle q\geq \sum_{i=\ell}^h \frac{a^2_i g_i}8$, where $q=|{\bf Q}|.$

\item\label{item:recursion} $g_h=t_h$ and $g_i \geq t_i-t_{i+1}$ for all $i\in [\ell,h-1].$ 
\end{enumerate}
\end{claim}
\begin{proof} We prove the two items separately.
\begin{enumerate}
    \item For all distinct levels $i,j\in[\ell,h]$, the sets $G_i$ and $G_j$ are disjoint because they contain windows of different levels. All windows in $\bigcup_{i\in[\ell,h]}G_i$ are associated with different maximal cells by construction of associations. Moreover, for all levels $i\in[\ell,h]$, each window in $G_i$ is associated with a maximal cell of level at least $i$, that is, a covered cell of that level. By definition, such a cell contributes at least $a^2_i/8$ queries to $\bf Q.$

    \item To prove $g_h=t_h$, note that each maximal cell of level $h$ is contained in exactly one window of $G_h$. Thus, all windows in $T_h$ are associated with maximal cells, implying $G_h=T_h$, and consequently $g_h=t_h$. 

    For the next part of the proof, we use the following terminology. For every window $w$ of a level $i\in[\ell,h-1]$, the {\em parent} of $w$, denoted $p_w$, is the window  of level $i+1$ that contains $w$. Window $w$ is called a {\em child} of $p_w$. In our construction, each window of level  $i\in[\ell+1,h]$ has 4 children.

    Now we fix a level $i\in[\ell,h-1]$ and argue that $t_i\leq g_i+t_{i+1}$. The sets $G_i$ and $T_{i+1}$ are disjoint because they contain windows of different levels. Therefore, $|G_i\cup T_{i+1}|=g_i+t_{i+1}.$ To prove $t_i\leq g_i+t_{i+1}$, we establish an injection from $T_i$ to $G_i\cup T_{i+1}$. Each window of $T_i$ that is also in $G_i$ is mapped to itself, and each window $w$ in $T_i\setminus G_i$ is mapped to $p_w$, i.e., the parent of~$w$.
    
    Next, we assume that $w\in T_i\setminus G_i$ and argue that $p_w\in T_{i+1}$. Since $w$ is a good window of level $i$, it contains a covered cell of level at least $i$. Recall that the largest cell is smaller than the smallest window, and that a window cannot contain only a part of a cell (because the side lengths of windows and cells in our partitions of the image are powers of two). As a result, $w$ also contains a maximal cell. Since $w$ is in $T_i$, but not in $G_i$, one of the maximal cells from $w$ is associated with some window $w'$ that was considered before $w$ in the construction of associations. Call this maximal cell $c_w$. Since each cell belongs to exactly one window of  level $i$, window $w'$ must be of level at least $i+1.$ Therefore, $c_w$ is a covered cell of level at least $i+1$. Since cell $c_w$ belongs to window $w$, it also belongs to window $p_w$. Consequently, $p_w$ is good because it is a window of level $i+1$ and contains a covered cell of level at least $i+1$. So, $p_w\in T_{i+1}.$  

    Finally, we show that our mapping from $T_i$ to $G_i\cup T_{i+1}$ is injective. Consider windows $w_1,w_2\in T_i$. Suppose for the sake of contradiction that they are mapped to the same window $w$ in $G_i\cup T_{i+1}$. Since $w_1$ and $w_2$ are windows of the same level, and we map each of them either to itself or its parent, both $w_1$ and $w_2$ must be children of $w$. Since $w_1$ and $w_2$ were mapped to their parent, they are not associated with any maximal cells. It means that both of them contain maximal cells associated with windows of higher levels. Let $c_1$ and $c_2$ be such maximal cells in $w_1$ and in $w_2$, respectively. Suppose w.l.o.g.\ that $c_1$ was associated first. Let $w'$ be the window associated with $c_2$. The window $w'$ must be of level at least $i+1$, so it contains both $c_1$ and $c_2$. But the procedure that constructed associations could associate $w'$ with $c_2$ only if all maximal cells of the currently considered level inside $w'$ are not associated with any of the previously considered windows. Since $c_1$ was already associated, we get a contradiction.\qedhere
\end{enumerate}
\end{proof}

Next, we define an event $E$ that must happen in order for the algorithm $\A$ to succeed in finding a witness of disconnectedness in an image distributed according to $\dN$. Then we show that the probability of $E$ is too small. 

\begin{definition}[A revealing set and event $E$] 
Let $M$ be an image distributed according to $\dN$. 
The set of all disconnecting pixels from one bridge square of $M$ is called a {\em revealing set} for the window containing the bridge square. Let $E$ denote the event that ${\bf Q}$ contains a revealing set for the interesting window of $M$.

\end{definition}

Since the tester $\A$ has 1-sided error, it can reject only if it finds a violation of connectedness. In particular, 
to succeed on an image distributed according to $\dN$, algorithm
$\A$ must find a revealing set for the interesting window of the image.

\begin{claim}\label{cl:small-witnesses} 
If the number of queries  $q\leq\frac c \eps \log \frac 1 \eps$ for sufficiently small constant $c$, then 
$\Pr[E]<1/3.$
\end{claim}

\begin{proof}
Let $G$ be the event that the interesting window in $M$ is good. Then, by the law of total probability,
$$\Pr[E]
=\Pr[E\mid G]\cdot\Pr[G]+\Pr[E\mid \overline{G}]\cdot\Pr[\overline{G}]
\leq \Pr[G]+\Pr[E\mid \overline{G}].$$
Next, we analyze the event $G$. 

We begin by giving a lower bound on $q$, the number of queries in $\bf Q$ in terms of the number of good windows of different levels. We first apply Items~\ref{item:lb-on-q} and \ref{item:recursion} of \Cref{claim:properties-of-good-windows} and then rearrange the terms of the resulting summation:

\begin{align}\label{eq:q-bound}
q
&\geq \sum_{i=\ell}^h \frac{a^2_i g_i}8 
\geq \frac{1}{8}\left(\sum_{i=\ell}^{h-1} a^2_i (t_i-t_{i+1}) + a_h^2 t_h \right)\nonumber\\
&= \frac{1}{8}\left(\sum_{i=\ell}^{h-1} 4^i (t_i-t_{i+1}) + 4^h t_h \right)\nonumber
= \frac{1}{8}\left(\sum_{i=\ell}^{h} 4^i t_i-\sum_{i=\ell+1}^{h}4^{i-1}t_{i} \right)
\\
&\geq \frac{3}{32} \sum_{i=\ell}^{h} 4^{i}t_i.
\end{align}
Recall that $q\leq \frac c{\eps}\log \frac1 {\eps}$, that the total number of levels is $h-\ell+1=\Theta(\log\frac 1\eps)$, and that the number of all windows for each level $i$ is 
$(\frac n{n_i})^2=\Theta(\frac 1 {4^i\eps}).$ 
When $M$ is drawn from $\dN$, first a level $i$ is chosen uniformly at random, i.e., each level is selected with probability $\frac 1 {h-\ell+1}$, and then an interesting window is chosen uniformly at random among the $(\frac n{n_i})^2$ available windows of level $i$. Since $t_i$ windows of level $i$ are good, the probability that a good window is chosen as an interesting window is
\begin{align*}
 \Pr[G]
& =\frac 1 {h-\ell+1}\sum_{i=\ell}^{h}\frac{t_i}{(n/n_i)^2}
=\frac 1 {\Theta(\log(\frac {1}{\eps}))} \sum_{i=\ell}^{h}(t_i\cdot \Theta({4^i\eps}))
=\frac {1} {\Theta(\frac 1\eps \log(\frac {1}{\eps}))} \sum_{i=\ell}^{h}(4^it_i)\\
&< \frac {q} {\Theta(\frac 1\eps \log(\frac {1}{\eps}))}
<1/6,   
\end{align*}
where \eqref{eq:q-bound} was used to obtain the first inequality, and the last inequality holds for sufficiently small constant $c$. 

It remains to analyze $\Pr[E\mid \overline{G}],$ that is, the probability of $E$, given that the interesting window is bad. Consider a bad window of level $i\in \{\ell,\dots, h\}$. It has at most $q$ bridge squares that contain a queried pixel. Consider one of such bridge squares.
Recall that this bridge square has $a_i/2-1$ bridges. Number these bridges using integers $1,2, \dots, a_i/2-1$. Let $x_k$ denote the number of pixels of {\bf Q} on the bridge number $k\in [a_i/2-1]$ of the bridge square. Then the probability over the choice of disconnecting pixels that this bridge square has a revealing set for the window is 
\begin{align*}
\nonumber    \prod_{k=1}^{a_i/2-1} \frac{x_k}{a_i}
&\leq \left(  \frac 1{a_i/2-1}\cdot\sum_{k=1}^{a_i/2-1}\frac{x_k}{a_i}\right)^{a_i/2-1}
\leq \left(  \frac {a_i/8}{a_i/2-1}\right)^{a_i/2-1}
\leq \left(\frac{1}{2}\right)^{a_i/2-1}\\
&\leq 2(\sqrt{2})^{-\sqrt[8]{\frac 1{\eps}}},
\end{align*}
where the first inequality follows from the inequality between geometric and arithmetic means, the second inequality holds since $\sum_{k=1}^{a_i/2}x_k<a_i^2/8$, and the third inequality holds since $\eps$ is sufficiently small and, consequently, we can assume that the minimum value of $a_i$ is at least 4.
By a union bound over all bridge squares in this window that contain a query, 
\begin{align*}
\Pr[E\mid \overline{G}]\leq 2(\sqrt{2})^{-\sqrt[8]{\frac 1{\eps}}}\cdot q
\leq
 2(\sqrt{2})^{-\sqrt[8]{\frac 1{\eps}}}\cdot \frac c{\eps} \cdot{\log \frac1 {\eps}<\frac 1 6},    
\end{align*}
for sufficiently small  $\eps$ and constant $c$. 
Thus, $\Pr[E]\leq \Pr[G]+\Pr[E\mid \overline{G}]<\frac 16+ \frac 1 6=\frac 1 3,$ 
as claimed. This completes the proof of Claim~\ref{cl:small-witnesses}.
\end{proof}

This concludes the proof of Lemma~\ref{lm:hard-to-find-witness}.
\end{proof}
Theorem~\ref{thm:nonadap-bound-connectedness} follows from Lemma~\ref{lm:hard-to-find-witness}.

\bibliography{visual-properties}
\bibliographystyle{alpha}
\end{document}